\tikzset{%
	>={Latex[width=2mm,length=2mm]},
	base/.style = {rectangle, rounded corners, draw=black,
		minimum width=0.3cm, minimum height=0.3cm,
		text centered, font=\texttrademark},
	OptimalPolicies/.style= {base, text=green},
	Estimation/.style= {base, text=red},
	Stabilization/.style= {base, text=blue},
	RLAlgo/.style= {base, text=violet},
}
\newtheorem{thm}{Theorem}
\newtheorem{lem}{Lemma}
\newtheorem{deff}{Definition}
\newtheorem{prop}{Proposition}
\newtheorem{assum}{Assumption}
\newtheorem{remark}{Remark}
\newtheorem{example}{Example}
\def \det {\mathrm{det}}
\def \exp {\mathrm{exp}}
\def \R {\mathbb{R}}
\def \C {\mathbb{C}}
\def \tailconst {b_1}
\def \tailcoeff {b_2}
\def \tailexp {\alpha}
\def \cost {\mathcal{J}}
\def \para {\theta}
\def \ssconstant {\rho}
\newcommand{\Mnorm}[2]{{\left\vert\kern-0.25ex\left\vert\kern-0.25ex\left\vert #1 
		\right\vert\kern-0.25ex\right\vert\kern-0.25ex\right\vert}_{#2}}
\newcommand{\Opnorm}[3]{{\left\vert\kern-0.25ex\left\vert\kern-0.25ex\left\vert #1 
		\right\vert\kern-0.25ex\right\vert\kern-0.25ex\right\vert}_{#2 \to #3}}
\newcommand{\norm}[2]{{\left\vert\kern-0.25ex\left\vert #1 
		\right\vert\kern-0.25ex\right\vert}_{#2}}
\newcommand{\eigmax}[1]{\left| \lambda_{\max} \left( #1 \right)\right|}
\newcommand{\eigmin}[1]{\left| \lambda_{\min} \left( #1 \right)\right|}
\newcommand{\tr}[1]{\mathrm{tr} \left( #1 \right)}
\newcommand{\PP}[1]{\mathbb{P} \left(#1\right)}
\newcommand{\E}[1]{\mathbb{E} \left[#1\right]}
\newcommand{\samplesize}[3]{{N}_{}\left(#2,#3\right)}
\newcommand{\event}[1]{{\mathcal{#1}}}
\newcommand{\rank}[1]{\mathrm{rank}\left(#1\right)}
\newcommand{\innerproductmin}[2]{\psi_{#1} \left(#2\right)}
\newcommand{\innerproductminconstant}[1]{\psi_0}
\newcommand{\loss}[3]{\mathcal{L}_#1^{#2} \left(#3\right)}
\newcommand{\dimension}[2]{\mathrm{dim}_{#1} \left(#2\right)}
\newcommand{\paraspace}[1]{\Omega^{(#1)}}
\newcommand{\Kmatrix}[1]{K\left(#1\right)}
\newcommand{\Lmatrix}[1]{L\left(#1\right)}
\newcommand{\extendedLmatrix}[1]{\tilde{L}\left(#1\right)}
\newcommand{\optcost}[1]{\mathcal{J}^\star \left(#1\right)}
\newcommand{\instantcost}[1]{c_{#1}}
\newcommand{\trans}[1]{D_{#1}}
\newcommand{\esttrans}[1]{\hat{D}_{#1}}
\newcommand{\edit}[1]{\textcolor{black}{#1}}
\begin{document}
\title{Finite Time Adaptive Stabilization of LQ Systems}
\author{Mohamad Kazem~Shirani Faradonbeh,
	Ambuj~Tewari,
	and~George~Michailidis}
\maketitle

\begin{abstract}
	Stabilization of linear systems with unknown dynamics is a canonical problem in adaptive control. Since the lack of knowledge of system parameters can cause it to become destabilized, an adaptive stabilization procedure is needed prior to regulation. Therefore, the adaptive stabilization needs to be completed in finite time. In order to achieve this goal, asymptotic approaches are not very helpful. There are only a few existing non-asymptotic results and a full treatment of the problem is not currently available.

	In this work, leveraging the novel method of {\em random} linear feedbacks, we establish high probability guarantees for finite time stabilization. Our results hold for remarkably general settings because we carefully choose a minimal set of assumptions. These include stabilizability of the underlying system and restricting the degree of heaviness of the noise distribution. To derive our results, we also introduce a number of new concepts and technical tools to address {\em regularity} and instability of the closed-loop matrix.
\end{abstract}
\begin{IEEEkeywords}
	Random Feedbacks, Unstable Estimation, Fast Stabilization, Finite Time Identification, Closed-loop Regularity.  
\end{IEEEkeywords}

\section{Introduction} \label{Intro} \label{background}
\IEEEPARstart{W}{e} consider finite time stabilization of the following linear system. Given the initial state $x(0) \in \R^{p}$, for $t=0,1,\cdots$ we have
\begin{eqnarray}
x(t+1) &=& A_0x(t)+B_0u(t)+ w(t+1), \label{systemeq1}
\end{eqnarray}
where at time $t$, the vector $x(t) \in \R^p$ corresponds to the state (and output) of the system, $u(t) \in \R^r$ is the control action, and $\left\{ w(t) \right\}_{t=1}^\infty$ is a sequence of noise (i.e. random disturbance) vectors. The dynamics of the system, i.e. both the transition matrix $A_0 \in \mathbb{R}^{p \times p}$, as well as the input matrix $B_0 \in \mathbb{R}^{p \times r}$, are fixed but {\em unknown}. 

In order to stabilize the system, we need to design an adaptive procedure and establish finite time theoretical guarantees of stabilization. Once the system has been stabilized, we can use an adaptive regulation policy to minimize the cost function determined by the application. Hence, the stabilization procedure needs to be completed in a relatively short time period. There is an extensive literature providing {\em infinite} time analyses to adaptively stabilize a Linear-Quadratic (LQ) system \cite{lai1986asymptotically,guo1988convergence,chen1989convergence,lai1991parallel,guo1991aastrom}, whereas finite time
results are scarce and rather incomplete. This work aims to contribute to this limited literature on the subject.

The evolution of LQ systems is governed by linear dynamics, while the operating cost is a quadratic function of the state and the control signal. To deal with the uncertainty about the true matrices guiding the system's dynamics, a standard scheme is Certainty Equivalence (CE) \cite{bar1974dual}. Its prescription is to assume that the {\em estimated} parameters coincide with the true dynamics matrices. However, it is shown that inconsistency occurs with positive probability \cite{lai1982least,becker1985adaptive,kumar1990convergence}, which can lead to instability. This motivated modifying CE to Optimism in the Face of Uncertainty (OFU) \cite{lai1985asymptotically,campi1997achieving,campi1998adaptive,bittanti2006adaptive}. OFU prescribes to act as if an {optimistic} approximation of the true parameter is the one guiding the evolution of the system. 

Recent {\em finite time} analyses consider a restricted setting \cite{abbasi2011regret,ibrahimi2012efficient}, where the proposed adaptive stabilization procedure heavily relies on the following strong conditions. First, {\em controllability} and {\em observability} of the true dynamics matrices of the system are assumed. \edit{Second, the closed-loop transition matrix is required to have {\em operator norm} less than one. Note that the former does not imply the latter \cite{bertsekas1995dynamic}. Third, uncertainty about the true dynamics matrices is restricted to an a priori {\em known} bounded box in the space of $p \times (p+r)$ matrices.} Finally, the noise vectors are supposed to have a sub-Gaussian distribution with uncorrelated coordinates. 

\edit{We introduce our stabilization algorithm and establish finite time guarantees for it in Section \ref{Stabilizing the System}. Leveraging the novel method of {\em random} linear feedbacks, we address the four aforementioned limitations. Indeed, the first assumption (which imposes a computationally intractable constraint \cite{faradonbeh2016optimality}), as well as the third one (which requires possibly unavailable information) are not needed. Further, we relax the operator norm condition to the minimal assumption of stabilizability. Finally, the noise process is generalized to the remarkably larger class of heavy-tailed sub-Weibull distributions with possibly correlated coordinates. Note that unlike the operator norm, stability of matrices is not preserved by multiplication (i.e. the product of stable matrices can be unstable). This means that existing theoretical techniques \cite{bittanti2006adaptive,abbasi2011regret} used in addressing the stabilization problem fail to work when the operator norm is not less than one.}

\edit{To derive finite time guarantees of stabilization, new concepts and technical tools are needed to address the following issues:
\begin{enumerate}
	\item 
	Because of the unbounded growth of the state vectors \cite{lai1985asymptotic}, the classical results of persistent excitation \cite{green1985persistence} are not applicable.
 \item 
	Since the system is not fully stabilized yet, the closed-loop matrix can have eigenvalues both inside and outside the unit circle. Thus, the smallest (largest) eigenvalue of the Gram matrix scales linearly (exponentially) with time \cite{lai1983asymptotic,ourpaper}. This leads to the failure of the existing approaches which do not need the persistent excitation condition \cite{abbasi2011regret,lai1986extended}.
	\item 
	For unstable systems, it is shown that the normalized empirical covariance of the state vector is a random matrix \cite{nielsen2005strong,nielsen2006order}. So, in order to obtain reliable identification results, anti-concentration properties of random matrices need to be carefully examined \cite{lai1983note}. 
	\item 
	For accurate identification, one needs to ensure that the important condition of closed-loop {\em regularity} holds (see Definition \ref{regularity}). It is a necessary condition on the eigenvalues of magnitude larger than one \cite{nielsen2009singular}.
\end{enumerate}
} 

The remainder of the paper is organized as follows. The problem is rigorously formulated in Section \ref{Optimal Policies}. Then, in Section \ref{Estimation} we study the key identification results for unstable closed-loop dynamics as the cornerstone of the stabilization algorithm presented later on. Subsequently in Section \ref{Stabilizing the System}, results regarding the properties of random linear feedback are established. Finally, we propose the adaptive stabilization {\bf Algorithm \ref{stabilizationalgo}}, and show that it is guaranteed to return a high probability stabilizing set.

\subsection{Notation}
The following notation is used throughout this paper. For matrix $A \in \C^{p \times q}$, $A'$ is its transpose. When $p=q$, the smallest (respectively largest) eigenvalue of $A$ (in magnitude) is denoted by $\lambda_{\min} (A)$ (respectively $\lambda_{\max}(A)$) and the trace of $A$ is denoted by $\tr{A}$. For $\gamma \in \mathbb{R}, \gamma \geq 1, v \in \mathbb{C}^q$,
the $\gamma$-norm of vector $v$ is $\norm{v}{\gamma} = \left(\sum\limits_{i=1}^{q} \left|v_i\right|^\gamma \right)^{1/\gamma}$. 
Further, when $\gamma=\infty$, the norm is defined according to $\norm{v}{\infty} = \max \limits_{1 \leq i \leq q} |v_i|$. 

We also use the following notation for the operator norm of matrices. For $\beta, \gamma \in \left[1,\infty\right]$, and $A \in \C^{p \times q}$, define 
$$\Opnorm{A}{\gamma}{\beta} = \sup \limits_{v \in \C^{q} \setminus \{0\}} \frac{\norm{Av}{\beta}}{\norm{v}{\gamma}}.$$
Whenever $\gamma = \beta$, we simply write $\Mnorm{A}{\beta}$. To denote the dimension of manifold $\mathcal{M}$ over the field $F$, we use $\dimension{F}{\mathcal{M}}$. Finally, the sigma-field generated by random vectors $X_1,\cdots,X_n$ is denoted by $\sigma \left( X_1,\cdots,X_n \right)$. \edit{The notations $\para,\Kmatrix{\para}, \Lmatrix{\para}$, and $\extendedLmatrix{\para}$ are defined in Remark \ref{paradeff}, equations \eqref{ricatti2}, \eqref{ricatti1}, and Remark \ref{Ltildedeff}, respectively.}

\section{Problem Formulation} \label{Optimal Policies}
We start by discussing the adaptive stabilization problem that constitutes the primary focus of this work. As mentioned above, the corresponding adaptive policy for regulating the system (i.e. cost minimization) can be employed, once the stabilization is guaranteed. Results of this work can be used in the finite time analysis of adaptive regulation for LQ systems. Further, stabilization of linear systems is intimately related to a Riccati equation for the corresponding LQ system. Therefore, we comprehensively discuss the necessary preliminaries here. 

The stochastic evolution of the system is governed by the linear dynamics \eqref{systemeq1}, where $\left\{ w(t) \right\}_{t=1}^\infty$ are independent mean-zero noise vectors with full rank covariance matrix $C$: 
\begin{equation*}
\E {w(t)}=0, \:\: \E {w(t)w(t)'}=C, \:\:\: \eigmin{C}>0.
\end{equation*}
Generalizations of the established results to dependent noise vectors (i.e. martingale difference sequences) is rather straightforward. The true dynamics matrices $A_0,B_0$ are assumed to be stabilizable, as defined below.
\begin{deff}[Stabilizability \cite{bertsekas1995dynamic}] \label{stabilizability}
	$\left[A,B\right]$ is stabilizable if there exists $L \in \R ^{r \times p}$ such that $\eigmax{A+BL} < 1$. The linear feedback matrix $L$ is called a stabilizer for $\left[A,B\right]$.
\end{deff}
\edit{\begin{remark} \label{paradeff}
	For notational convenience, henceforth for $A \in \R^{p \times p}$, $B \in \R^{p \times r}$, we use $\para$ to denote $\left[A,B\right]$. Clearly, $\para \in \R^{p \times q}$, where $q=p+r$.
\end{remark}} 
We assume {\em perfect observations}, i.e. the operator can fully observe the sequence of state vectors. Next, suppose that $\instantcost{t}$ is the quadratic instantaneous cost function at time $t$: 
\begin{eqnarray}
\instantcost{t}&=& x(t)'Qx(t) + u(t)'Ru(t), \label{systemeq2}
\end{eqnarray}
which is defined according to the known positive definite cost matrices $Q \in \R^{p \times p}, R \in \R^{r \times r}$. An adaptive policy is a mapping which designs the control action according to the cost matrices, and the history of the system. That is, for all $t=0,1,\cdots$, the operator needs to determine $u(t)$ according to $Q, R$, $\left\{ x(i) \right\}_{i=0}^t, \left\{ u(j) \right\}_{j=0}^{t-1}$.

The following proposition shows that in order to stabilize a linear system, one can solve a \emph{Riccati} equation. A solution, is a positive semidefinite matrix $\Kmatrix{\para}$ satisfying \eqref{ricatti2}. 
\begin{table*}
	\begin{eqnarray} 
	\Kmatrix{\para} &=& Q + A'\Kmatrix{\para}A - A' \Kmatrix{\para}B \left(B'\Kmatrix{\para}B+R\right)^{-1} B'\Kmatrix{\para}A , \label{ricatti2} \\
	\Lmatrix{\para} &=& -\left(B'\Kmatrix{\para}B+R\right)^{-1} B'\Kmatrix{\para}A. \label{ricatti1}
	\end{eqnarray}
\end{table*}
For this purpose, we introduce a notation that simplifies certain expressions throughout this work. 
\begin{remark} \label{Ltildedeff}
	For arbitrary stabilizable $\para_1,\para_2 \in \R^{p \times q}$, let $\extendedLmatrix{\para_1}=\begin{bmatrix} I_p \\ \Lmatrix{\para_1} \end{bmatrix} \in \R^{q \times p}$. So, $\para_2 \extendedLmatrix{\para_1}=A_2+B_2\Lmatrix{\para_1}$.
\end{remark}
\begin{prop} \label{stabilizable}
	If $\para$ is stabilizable, \eqref{ricatti2} has a unique solution. Conversely, if \eqref{ricatti2} has a solution, $\Lmatrix{\para}$ defined by \eqref{ricatti1} is a stabilizer for the dynamics parameter $\para$; i.e. $\eigmax{\para \extendedLmatrix{\para}}<1$. 
\end{prop}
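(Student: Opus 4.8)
The plan is to read \eqref{ricatti2} as the Bellman equation of an infinite-horizon LQ problem and to reduce the spectral claim to a Lyapunov equation. I would first dispatch the converse (stabilizer) direction, since it is cleanest and its conclusion is reused for uniqueness. Given a positive semidefinite solution $\Kmatrix{\para}$, set $S = B'\Kmatrix{\para}B + R$ and $A_{cl} = \para\extendedLmatrix{\para} = A + B\Lmatrix{\para}$. Substituting \eqref{ricatti1} into \eqref{ricatti2} and completing the square (using $S = S'$) converts the Riccati equation into the discrete Lyapunov identity
\begin{equation*}
\Kmatrix{\para} = A_{cl}'\Kmatrix{\para}A_{cl} + \left(Q + \Lmatrix{\para}'R\Lmatrix{\para}\right),
\end{equation*}
whose additive term is positive definite because $Q \succ 0$. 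Testing this against an eigenvector $v$ of $A_{cl}$ with eigenvalue $\lambda$ gives $(1-|\lambda|^2)\,v^{*}\Kmatrix{\para}v = v^{*}(Q+\Lmatrix{\para}'R\Lmatrix{\para})v > 0$; since $\Kmatrix{\para}\succeq 0$ this forces both $v^{*}\Kmatrix{\para}v > 0$ and $|\lambda| < 1$, i.e. $\eigmax{\para\extendedLmatrix{\para}} < 1$.

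For existence I would run the Riccati value iteration. Writing $\mathcal{R}(K) = Q + A'KA - A'KB(B'KB+R)^{-1}B'KA$ in its variational form $\mathcal{R}(K) = \min_{L}\big[\,Q + L'RL + (A+BL)'K(A+BL)\,\big]$, with minimizer given by \eqref{ricatti1}, monotonicity of $\mathcal{R}$ is immediate: $0 \preceq K \preceq K'$ implies $\mathcal{R}(K) \preceq \mathcal{R}(K')$. Starting from $K_0 = 0$ and iterating $K_{n+1} = \mathcal{R}(K_n)$, induction yields a monotone nondecreasing sequence. Stabilizability supplies a fixed $L$ with $A+BL$ stable; inserting this suboptimal $L$ into the variational form and unrolling bounds $K_n \preceq \sum_{j\ge0}((A+BL)')^{j}(Q+L'RL)(A+BL)^{j}$, the finite solution of a Lyapunov equation. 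A monotone, uniformly bounded sequence of symmetric matrices converges, and since $B'K_nB + R \succeq R \succ 0$ keeps the matrix inverse continuous, the limit $\Kmatrix{\para}$ satisfies \eqref{ricatti2}. This existence step—establishing the uniform upper bound and justifying passage to the limit—is the part I expect to require the most care, as the two remaining directions collapse to the algebraic identities above.

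Uniqueness follows from the optimal-control reading of the same completion of squares. For any solution $K$ and any input sequence, the per-step identity $x(t)'Kx(t) - x(t+1)'Kx(t+1) = \instantcost{t} - (u(t)-\Lmatrix{\para}x(t))'S(u(t)-\Lmatrix{\para}x(t))$ telescopes over $t=0,\dots,N-1$ to
\begin{equation*}
\sum_{t=0}^{N-1}\instantcost{t} = x(0)'Kx(0) - x(N)'Kx(N) + \sum_{t=0}^{N-1}(u(t)-\Lmatrix{\para}x(t))'S(u(t)-\Lmatrix{\para}x(t)).
\end{equation*}
Because every positive semidefinite solution is stabilizing by the first direction, the feedback $u = \Lmatrix{\para}x$ is admissible and drives $x(N)\to 0$, so letting $N\to\infty$ identifies $x(0)'Kx(0)$ with the minimal infinite-horizon cost from $x(0)$ (the subtracted quadratic is $\ge 0$ since $S\succ 0$, and vanishes exactly along $u = \Lmatrix{\para}x$). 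As this minimum depends only on $(A,B,Q,R)$ and $x(0)$ is arbitrary, any two positive semidefinite solutions induce the same quadratic form and therefore coincide.
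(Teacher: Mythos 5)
Your proof is correct, and while the existence step mirrors the paper's argument, the other two parts take genuinely different routes. For existence, your monotone value iteration $K_{n+1}=\mathcal{R}(K_n)$ from $K_0=0$, bounded above by the Lyapunov series of a fixed stabilizing feedback, is exactly the paper's finite-horizon dynamic-programming argument ($P_T(0)$ nondecreasing and bounded by the cost of the suboptimal policy $u=Lx$) recast in operator form. For the converse, both proofs pass through the closed-loop Lyapunov identity $K-D'KD=Q+L'RL$ (the paper's \eqref{ricatti3}), but the paper then runs a trajectory argument --- telescoping $x(t)'Kx(t)$ along $x(t)=D^{t}x(0)$ to force $x(t)\to 0$ for every initial state --- whereas you test the identity against a (possibly complex) eigenvector to read off $|\lambda|<1$ directly; your version is shorter and avoids the ``all trajectories converge, hence spectral radius less than one'' step that the paper leaves implicit. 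For uniqueness, the paper proves $P_T(P_0)\to P_\infty(0)$ for every positive semidefinite seed $P_0$ via a sandwich between the optimal finite-horizon cost and the cost of the feedback $\Lmatrix{\para}$, and then notes that any solution is a fixed point of the iteration; you instead use the completion-of-squares telescoping identity to show that every positive semidefinite solution $K$ represents the same quantity $x(0)'Kx(0)=\inf\sum_t \instantcost{t}$, which depends only on $(A,B,Q,R)$. Both work; yours isolates the variational meaning of a solution more cleanly, while the paper's yields the stronger byproduct that the Riccati iteration converges from any positive semidefinite initialization. The one detail worth making explicit in your uniqueness step is the disposal of the boundary term $-x(N)'Kx(N)$ for a general input sequence: when the total cost is finite, $Q\succ 0$ forces $x(N)\to 0$ so the term vanishes, and when it is infinite the lower bound is vacuous.
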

The proof of Proposition \ref{stabilizable} is provided in Appendix A, where the following cost minimization property of Riccati equations \eqref{ricatti2}, \eqref{ricatti1} is established as well. Assuming the system evolves according to \eqref{systemeq1}, the linear feedback $u(t) = \Lmatrix{\para_0} x(t)$ minimizes the expected average cost of the system of dynamics parameter $\para_0$. Namely, letting $\instantcost{t}$ be as \eqref{systemeq2}, in general it holds that
\begin{equation*}
\limsup \limits_{T \to \infty} \frac{1}{T} \sum \limits_{t=1}^T \E{\instantcost{t}} \geq \tr{\Kmatrix{\para_0}C},
\end{equation*}
where the linear feedback $u(t) = \Lmatrix{\para_0} x(t)$ attains the equality. An adaptive stabilization procedure is ignorant about the true parameter $\para_0$, and needs to estimate it. The following lemma addresses the stability if the actual system evolution parameter is $\para_0$, while the linear feedback $\Lmatrix{\para}$ is designed according to the approximation $\para$. The proof of Lemma \ref{SNeighborhood} can be found in Appendix B.
\begin{lem} [Stabilizing neighborhood] \label{SNeighborhood}
	There is $\epsilon_0>0$, such that for every stabilizable $\para$, if $\Mnorm{\para-\para_0}{2} < \epsilon_0$, then, $\para_0\extendedLmatrix{\para}$ is stable. 
\end{lem}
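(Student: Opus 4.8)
The plan is to reduce the claim to a continuity statement: I will show that the map $\para \mapsto \para_0\extendedLmatrix{\para} = A_0 + B_0\Lmatrix{\para}$ is continuous at $\para_0$ and that its spectral radius is strictly below one at $\para = \para_0$; a uniform radius of admissible perturbations then follows. By Proposition \ref{stabilizable}, since the (fixed) true parameter $\para_0$ is stabilizable, its Riccati solution $\Kmatrix{\para_0}$ exists and $\rho_0 := \eigmax{\para_0\extendedLmatrix{\para_0}} < 1$. Because $\eigmax{\cdot}$ is a continuous function of the entries of a matrix, it suffices to prove that $\para \mapsto \Lmatrix{\para}$ (hence $\para \mapsto A_0 + B_0\Lmatrix{\para}$) is continuous at $\para_0$; then for every stabilizable $\para$ with $\para$ close enough to $\para_0$ one gets $\eigmax{\para_0\extendedLmatrix{\para}} < (1+\rho_0)/2 < 1$, and I take $\epsilon_0$ to be the corresponding radius in the $\Mnorm{\cdot}{2}$ norm.

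The crux is the continuity of the feedback map, which I would obtain from the implicit function theorem applied to the Riccati residual $F(\para,K) := -K + Q + A'KA - A'KB(B'KB+R)^{-1}B'KA$, which satisfies $F(\para_0,\Kmatrix{\para_0})=0$ by \eqref{ricatti2}. Substituting the optimal gain \eqref{ricatti1}, a direct differentiation shows that the cross terms in the partial Fréchet derivative collapse to the closed-loop form, so that $D_K F$ at $(\para_0,\Kmatrix{\para_0})$ equals (up to sign) the Stein operator $X \mapsto X - (\para_0\extendedLmatrix{\para_0})' X (\para_0\extendedLmatrix{\para_0})$. This operator is invertible precisely because $\para_0\extendedLmatrix{\para_0}$ is stable: its eigenvalue products $\lambda_i\lambda_j$ all have modulus $<1$ and hence differ from $1$. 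The implicit function theorem then yields a unique continuously differentiable branch $\para \mapsto K(\para)$ near $\para_0$ with $K(\para_0)=\Kmatrix{\para_0}\succ 0$; since positive definiteness persists on a neighborhood and the positive semidefinite solution is unique by Proposition \ref{stabilizable}, this branch coincides with $\Kmatrix{\para}$. Continuity of $\Kmatrix{\para}$ together with \eqref{ricatti1} gives continuity of $\Lmatrix{\para}$, completing the reduction.

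I expect the identification and invertibility of $D_K F$ as the Stein operator to be the main obstacle, since this is the only non-routine computation and the one point where stability of the true closed loop is genuinely used; everything else is continuity bookkeeping.

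As an alternative that also produces an explicit $\epsilon_0$, I would use $\Kmatrix{\para}$ itself as a Lyapunov certificate for the perturbed closed loop. Rewriting \eqref{ricatti2} in closed-loop form gives the identity $\Kmatrix{\para} - (\para\extendedLmatrix{\para})'\Kmatrix{\para}(\para\extendedLmatrix{\para}) = Q + \Lmatrix{\para}'R\Lmatrix{\para} \succeq Q \succ 0$. Writing $\para_0\extendedLmatrix{\para} = \para\extendedLmatrix{\para} - (\para-\para_0)\extendedLmatrix{\para}$ and expanding, the Lyapunov decrease for $\para_0\extendedLmatrix{\para}$ equals $Q + \Lmatrix{\para}'R\Lmatrix{\para}$ plus cross terms of order $\Mnorm{\para-\para_0}{2}$. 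Using local boundedness of $\Kmatrix{\para}$, $\Lmatrix{\para}$, and $\para\extendedLmatrix{\para}$ near $\para_0$ (from the continuity just established), these cross terms are dominated by $\tfrac12\eigmin{Q}$ once $\Mnorm{\para-\para_0}{2}$ is small, so $\Kmatrix{\para} - (\para_0\extendedLmatrix{\para})'\Kmatrix{\para}(\para_0\extendedLmatrix{\para}) \succ 0$ with $\Kmatrix{\para}\succ 0$; the discrete Lyapunov theorem then yields $\eigmax{\para_0\extendedLmatrix{\para}}<1$.
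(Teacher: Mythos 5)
Your argument is correct, but it takes a genuinely different route from the paper's. The paper never studies the regularity of the Riccati map at all: it invokes Proposition \ref{stabilizable} to get that $\para\extendedLmatrix{\para}$ is stable with some margin $2\rho$, then \emph{freezes} the feedback $\extendedLmatrix{\para}$ and perturbs only the left factor, writing $\para_0=\para+\sum_{i,j}\varphi_{ij}X_{ij}$ and using continuity of the roots of $\det\left(\left(\para+\varphi X_{ij}\right)\extendedLmatrix{\para}-\lambda I_p\right)$ in $\varphi$ to keep all eigenvalues of $\para_0\extendedLmatrix{\para}$ below $1-\rho$. This exploits the specific structure of the quantity in Lemma \ref{SNeighborhood} --- $\para_0\extendedLmatrix{\para}$ and $\para\extendedLmatrix{\para}$ differ only in the left factor --- so continuity of $\para\mapsto\Lmatrix{\para}$ is never needed, and the argument stays elementary. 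You instead anchor everything at $\para_0$: the implicit function theorem on the Riccati residual (your identification of $D_KF$ as $X\mapsto -X+\left(\para_0\extendedLmatrix{\para_0}\right)'X\left(\para_0\extendedLmatrix{\para_0}\right)$ is correct, as is its invertibility under stability), combined with uniqueness of the positive semidefinite solution from Proposition \ref{stabilizable}, gives local continuity of $\Kmatrix{\para}$ and hence of $\Lmatrix{\para}$, after which continuity of the spectral radius finishes. Your approach costs more machinery but buys two things: a reusable regularity statement for $\para\mapsto\Kmatrix{\para},\Lmatrix{\para}$ that the paper does not establish, and constants ($\rho_0$ and the IFT radius) that depend only on $\para_0$, so the uniformity of $\epsilon_0$ over all stabilizable $\para$ in the ball is immediate; in the paper's proof the margin $\rho$ and the radii $\epsilon_{ij}$ are computed from $\para$, and the reader must supply the extra step making them uniform. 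Your second, Lyapunov-certificate variant is also sound and yields an explicit $\epsilon_0$ in terms of $\eigmin{Q}$, but note it is not independent of the first part, since the local boundedness of $\Kmatrix{\para}$ and $\Lmatrix{\para}$ it uses comes from the continuity you established there.
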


\section{Closed-loop Identification} \label{Estimation}
When applying linear feedback $L \in \R^{r \times p}$, the dynamics take the form $x(t+1)=\trans{}x(t)+w(t+1)$, where $\trans{}=A_0+B_0L$ is the {\em unstable} closed-loop transition matrix. Subsequently, we present results for the accurate identification of $\trans{}$ through the least-squares estimator. Observing the state vectors $\left\{x(t)\right\}_{t=0}^n$, for an arbitrary matrix $E \in \R^{p \times p}$ define the sum-of-squares loss function
$\loss{n}{}{E}=\sum\limits_{t=0}^{n-1} \norm{x(t+1)- E x(t)}{2}^2$. Then, the true closed-loop transition matrix $\trans{}$ is estimated by $\esttrans{n} $, which is a minimizer of the loss function; $\loss{n}{}{\esttrans{n}} = \min\limits_{E \in \R^{p \times p}} \loss{n}{}{E}$. To analyze the finite time behavior of the aforementioned identification procedure, the following is assumed for the tail-behavior of every coordinate of the noise vector.
\begin{assum}[Sub-Weibull distribution \cite{ourpaper}] \label{tailcondition} \label{2tailcondition}
	There are positive reals $\tailconst, \tailcoeff$, and $\tailexp$, such that for all $t\geq 1; 1 \leq i \leq p; y >0$,  
	\begin{equation*}
	\PP{\left|w_i(t)\right| > y} \leq \tailconst \: \exp \left(-\frac{y^\tailexp}{\tailcoeff}\right).
	\end{equation*}	
\end{assum}
Intuitively, smaller values of the exponent $\tailexp$ correspond to heavier tails for the noise distribution, and vice versa. Note that whenever $\tailexp <1$, the noise coordinates $w_i(t)$ do not need to have a moment generating function. Further, the noise coordinates can be either discrete or continuous random variables, and are not assumed to have a probability density function (pdf). Henceforth, the special case of bounded noise can be obtained from the presented results letting $\tailexp \to \infty$. 

Next, we define an important property of unstable transition matrices which is required in order to obtain accurate estimation results.
\begin{deff}[Regularity \cite{nielsen2009singular}] \label{regularity} \label{2regularity}
	$\trans{} \in \R^{p \times p}$ is regular if for any eigenvalue $\lambda$ of $\trans{}$ such that $\left|\lambda\right|>1$, the geometric multiplicity of $\lambda$ is one.
\end{deff}
Regularity implies that the eigenspace corresponding to $\lambda$ is one dimensional, and vice versa. There are other equivalent formulations for regularity. Indeed, $\trans{}$ is regular if and only if for any eigenvalue $\lambda$ such that $\left| \lambda \right|>1$, in the Jordan decomposition of $\trans{}$ there is only one block corresponding to $\lambda$, regardless of its algebraic multiplicity. Another equivalent formulation is that $\trans{}$ is regular, if and only if $\rank{\trans{}-\lambda I_p} \geq p-1$, for all $\lambda \in \C$, $\left|\lambda \right|>1$. For example, let $P_1,P_2 \in \C^{2 \times 2}$ be arbitrary invertible matrices, and assume 
\begin{equation*}
\trans{1}=P_1^{-1}\begin{bmatrix}
\rho & 1 \\ 0 & \rho
\end{bmatrix} P_1, \trans{2} = P_2^{-1} \begin{bmatrix}
\rho & 0 \\ 0 & \rho
\end{bmatrix} P_2,
\end{equation*}
are real $2 \times 2$ matrices, where $\rho \in \C$ satisfies $\left|\rho\right|>1$. Then, $\trans{1}$ {\em is} regular, but $\trans{2}$ {\em is not}.

\edit{In order to examine the accuracy of the least-squares estimation, we leverage existing finite time identification results for unstable dynamics \cite{ourpaper}. First, if the empirical covariance matrix $V_n=\sum\limits_{t=0}^{n-1} x(t)x(t)'$ is non-singular, one can write $\esttrans{n}= \sum\limits_{t=0}^{n-1} x(t+1)x(t)'V_n^{-1}$. Hence, the behavior of $V_n$ governs the estimation accuracy. For unstable $D$, an appropriately normalized $V_n$ is shown to be a random matrix \cite{ourpaper}. Thus, letting $\tilde{V}_n$ denote the normalized matrix, the accuracy of $\esttrans{n}$ depends on the stochastic lower bounds of $\tilde{V}_n$. Let $\innerproductmin{}{\delta}$ be the high probability lower bound of $\tilde{V}_n$; i.e. it is sufficiently small to satisfy $\PP{\eigmin{\tilde{V}_n} < \innerproductmin{}{\delta}} < \delta$. The following statement studies $\innerproductmin{}{\delta}$ based on anti-concentration results for sequences of random matrices \cite{lai1983note}.
\begin{prop} \cite{ourpaper} \label{PDcore}
	Suppose that $D$ is regular. In general, $\delta>0$ implies that $\innerproductmin{}{\delta}> 0$. Further, if $w(t_0)$ has a bounded pdf for some $t_0 \geq 1$, then for all $\delta>0$ we have $\innerproductmin{}{\delta} \geq \innerproductminconstant{\trans{0}} \delta$, where $\innerproductminconstant{\trans{0}} >0$ is a fixed constant.
\end{prop}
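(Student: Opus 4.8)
The plan is to reduce the statement to an anti-concentration property of the random matrix that $\tilde{V}_n$ approaches, and then to read off both conclusions from the behavior of that matrix near its singular locus. First I would change coordinates by the real Jordan decomposition $D = PJP^{-1}$, separating the eigenvalues with $|\lambda|>1$ from the rest. Writing $x(t) = D^t x(0) + \sum_{s=1}^{t} D^{t-s}w(s)$, the component along each unstable eigenvalue $\lambda$ is governed by the convergent random series $Y_\lambda = \sum_{s \ge 1}\lambda^{-s}(\text{projection of } w(s))$, well defined precisely because $|\lambda|>1$. After the per-block normalization built into $\tilde{V}_n$, the explosive part converges almost surely to a random matrix $\tilde{V}_\infty$ whose entries are bilinear in the coefficients $\{Y_\lambda\}$, while the non-explosive part converges to a deterministic positive matrix by a law-of-large-numbers argument. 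Regularity enters in an essential way here: the single-Jordan-block structure for each unstable $\lambda$ is exactly what makes this limit well defined and of full rank, whereas a repeated block would force a rank drop (as the example $\trans{2}$ illustrates).

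For the general claim, I would observe that $\innerproductmin{}{\delta}>0$ for every $\delta>0$ is equivalent to $\PP{\eigmin{\tilde{V}_\infty}=0}=0$: since $\eigmin{\tilde{V}_\infty}\ge 0$, continuity of the distribution function from above gives $\lim_{\epsilon \downarrow 0}\PP{\eigmin{\tilde{V}_\infty}<\epsilon} = \PP{\eigmin{\tilde{V}_\infty}=0}$, so a strictly positive threshold with the required tail exists if and only if the limit is almost surely nonsingular. I would establish nonsingularity by noting that $\det \tilde{V}_\infty$ is a nontrivial function of the independent noise vectors — nontrivial because the deterministic shape produced by a regular $D$ is invertible — and then invoking the anti-concentration machinery for sequences of random matrices behind \cite{lai1983note} to conclude that its zero set carries probability zero, even when the coordinates of $w(t)$ are discrete; the full-rank covariance $\eigmin{C}>0$ is what guarantees that the accumulated randomness genuinely spreads in every direction. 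Positivity for finite $n$ then follows by transferring the almost-sure bound $\liminf_n \eigmin{\tilde{V}_n} = \eigmin{\tilde{V}_\infty}>0$ to a uniform (in $n \ge n_0$) high-probability lower bound.

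For the quantitative bound under a bounded density, I would condition on the entire noise sequence except $w(t_0)$, so that the relevant coefficients become affine in $w(t_0)$ and inherit a bounded conditional density. Anti-concentration then bounds the probability that $\eigmin{\tilde{V}_n}$ falls in a small neighborhood of the singular locus, and carried out carefully this would give $\PP{\eigmin{\tilde{V}_n}<\epsilon} \le \epsilon/\innerproductminconstant{\trans{0}}$, hence, inverting in $\delta$, the stated $\innerproductmin{}{\delta} \ge \innerproductminconstant{\trans{0}}\delta$; the constant $\innerproductminconstant{\trans{0}}$ absorbs the density bound together with the spectral geometry of $D$.

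The hard part will be the two quantitative pieces rather than the structural reduction. The first is the density-free nonsingularity of $\tilde{V}_\infty$: for atomic noise one cannot appeal to absolute continuity, so the argument must lean on regularity and the spreading provided by $\eigmin{C}>0$ to exclude exact cancellations with positive probability, which is exactly where the random-matrix anti-concentration results are indispensable. The second is extracting the sharp linear-in-$\delta$ rate: because $\tilde{V}_n$ is a Gram-type, quadratic object, a single explosive mode contributes its randomness only through $|Y_\lambda|^2$, so naive estimates degrade the rate, and obtaining $\innerproductminconstant{\trans{0}}\delta$ requires showing that the bounded density enters the smallest eigenvalue in an effectively first-order, non-degenerate fashion while the interaction across modes and across $n$ is controlled uniformly.
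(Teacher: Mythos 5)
First, a framing point: this paper does not actually prove Proposition \ref{PDcore} --- it is imported from the companion work \cite{ourpaper}, and the surrounding text only names the ingredients (the normalized Gram matrix $\tilde V_n$, its random limit, the dependence on the eigenvalue gaps $\log|\overline\lambda_i-\overline\lambda_j|$, and the anti-concentration results of \cite{lai1983note}). Your structural reduction --- real Jordan decomposition, the convergent series $\sum_{s\geq 1}\lambda^{-s}w(s)$ attached to each unstable mode, regularity as the reason the normalized limit does not drop rank (your reading of the $\trans{2}$ example is exactly right), and the split into a density-free qualitative claim plus a bounded-density quantitative claim --- matches the route those ingredients indicate. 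So the skeleton is the right one.

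The difficulty is that both conclusions of the proposition live precisely in the steps you defer. (i) For the claim that $\innerproductmin{}{\delta}>0$ with no density assumption, the inference ``$\det\tilde V_\infty$ is a nontrivial function of the noise, hence its zero set has probability zero'' is not valid for atomic noise: a nontrivial polynomial of $\pm 1$-valued independent variables can vanish with positive probability. What actually closes this step is that each unstable-mode coefficient is an infinite convergent sum of independent, non-degenerate increments (non-degenerate because $\eigmin{C}>0$), so its law assigns no mass to the lower-dimensional singular locus; that is the content of the Marcinkiewicz--Zygmund-type result in \cite{lai1983note}, and it must be applied to the series itself, not to an abstract ``nontrivial function.'' You cite the right tool but do not perform the reduction that makes it applicable. (ii) For the linear rate $\innerproductmin{}{\delta}\geq\innerproductminconstant{\trans{0}}\delta$, you correctly observe that the Gram structure naively yields only $\PP{\eigmin{\tilde V_n}<\epsilon}=O(\sqrt{\epsilon})$, i.e.\ a $\delta^2$ rate, and then state that one ``requires showing'' the bounded density enters the smallest eigenvalue in a first-order way --- but that is the assertion to be proved, not a proof step. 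Finally, the proposition is a statement about $\tilde V_n$ at finite $n$, whereas your argument controls $\eigmin{\tilde V_\infty}$ and gestures at ``transferring'' the bound; almost-sure convergence alone does not yield a threshold valid uniformly over $n\geq n_0$ without an additional quantitative convergence estimate. As a blind reconstruction of the strategy behind the cited result your outline is sound, but as a proof it stops exactly where the difficulty begins.
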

Theorem \ref{consistency} determines the time length the user should interact with the system, in order to collect sufficiently many observations for accurate identification of the unstable matrix $\trans{}$. The sample size is based on the constant $\ssconstant_{}$, for which the exact dependence on the noise parameters $\tailconst, \tailcoeff, \tailexp$, $\eigmin{C}, \eigmax{C}$, and the closed-loop matrix $\trans{}$ is available \cite{ourpaper}. Moreover, let $\overline{\lambda}_1 , \cdots, \overline{\lambda}_{k}$ (respectively $\underline{\lambda}_1, \cdots,\underline{\lambda}_{\ell}$) be the distinct eigenvalues of $\trans{}$ outside (respectively inside) the unit circle. Then, $\innerproductmin{}{\delta}$ depends on $\eigmin{C}$, $\min\limits_{1 \leq i \leq \ell} 1-\left| \underline{\lambda}_i \right|$, $\min\limits_{1 \leq i \leq k} \log \left| \overline{\lambda}_i \right|$, and $\min\limits_{1 \leq i < j \leq k} \log \left| \overline\lambda_i - \bar{\lambda}_j \right|$ \cite{ourpaper}. The constant $\innerproductminconstant{}$ depends on the upper bound of the pdf of $w(t_0)$ as well. The explicit specification of these dependencies is fully presented in \cite{ourpaper} and hence ommitted.} Next, let $\samplesize{\ref{consistency}}{\epsilon}{\delta}$ be large enough, such that $n \geq \samplesize{\ref{consistency}}{\epsilon}{\delta}$ implies
\begin{equation} \label{generalcasecondition}
\frac{n}{\left(\log n\right)^{4/\tailexp}} \geq \frac{\ssconstant_{}}{\epsilon^2} \left( \left( -\log \delta \right)^{1+4/\tailexp} - \log \innerproductmin{}{\delta} \right ).
\end{equation}
\begin{thm} [Unstable identification \cite{ourpaper}] \label{consistency}
	Suppose that $D$ is regular, and has no eigenvalue of unit size. As long as $n \geq \samplesize{\ref{consistency}}{\epsilon}{\delta}$, we have \begin{equation*}
	\PP{\Mnorm{\esttrans{n}-D}{2} \leq \epsilon} \geq 1-\delta.
	\end{equation*}
\end{thm}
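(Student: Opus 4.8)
The plan is to begin from the closed form of the least-squares estimator and reduce the claim to two decoupled estimates: an anti-concentration lower bound on a normalized Gram matrix, and a concentration upper bound on a normalized noise-state martingale. Whenever $V_n=\sum_{t=0}^{n-1}x(t)x(t)'$ is invertible we have $\esttrans{n}=\sum_{t=0}^{n-1}x(t+1)x(t)'V_n^{-1}$, and substituting the closed-loop dynamics $x(t+1)=\trans{}x(t)+w(t+1)$ gives the exact error identity $\esttrans{n}-\trans{}=M_n V_n^{-1}$, where $M_n=\sum_{t=0}^{n-1}w(t+1)x(t)'$. Relative to the filtration $\mathcal{F}_t=\sigma(x(0),w(1),\dots,w(t))$, the matrix $M_n$ is a martingale, since $x(t)$ is $\mathcal{F}_t$-measurable while $w(t+1)$ is mean-zero and independent of $\mathcal{F}_t$. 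A naive split $\Mnorm{M_nV_n^{-1}}{2}\le\Mnorm{M_nV_n^{-1/2}}{2}\,\eigmin{V_n}^{-1/2}$ is useless here: as flagged in the introduction, $\eigmin{V_n}$ grows only linearly while the self-normalized numerator scales like $\log\det V_n$, which is \emph{also} linear in $n$ because of the unstable eigenvalues, so their ratio is $\order{1}$ and never vanishes. The remedy is to normalize both factors by the same deterministic growth matrix $\Gamma_n$ read off from the Jordan structure of $\trans{}$ — the block of a stable eigenvalue scaled by $\sqrt n$, that of an unstable eigenvalue $\overline\lambda_i$ by $|\overline\lambda_i|^{n}$. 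Writing $V_n=\Gamma_n\tilde{V}_n\Gamma_n'$ factors the error as $\esttrans{n}-\trans{}=\bigl(M_n(\Gamma_n')^{-1}\bigr)\,\tilde{V}_n^{-1}\,\Gamma_n^{-1}$, whence
\begin{equation*}
\Mnorm{\esttrans{n}-\trans{}}{2}\ \le\ \frac{\Mnorm{M_n(\Gamma_n')^{-1}}{2}}{\eigmin{\tilde{V}_n}\ \sigma_{\min}(\Gamma_n)},\qquad \sigma_{\min}(\Gamma_n)=\sqrt n .
\end{equation*}
The linear factor $\sqrt n$ comes from the slowest (stable) directions and is what ultimately produces the $\orderlog{n^{-1/2}}$ rate.

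\emph{Denominator.} Because $\trans{}$ is unstable, $\tilde{V}_n$ does not converge to a deterministic limit but to a genuinely random matrix, so no deterministic lower bound on $\eigmin{\tilde{V}_n}$ exists. This is exactly where the hypotheses enter: regularity (Definition \ref{regularity}) forces a single Jordan block per unstable eigenvalue, and the absence of unit-modulus eigenvalues makes the normalization $\Gamma_n$ well posed, so that $\tilde{V}_n$ is congruent to $V_n$ and non-degenerate. I would then invoke Proposition \ref{PDcore} to obtain the anti-concentration estimate $\PP{\eigmin{\tilde{V}_n}<\innerproductmin{}{\delta}}<\delta$ with $\innerproductmin{}{\delta}>0$, the only high-probability lower bound available in this random-matrix regime.

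\emph{Numerator.} It remains to control the normalized martingale $M_n(\Gamma_n')^{-1}$, whose stable block is $n^{-1/2}\sum_t w(t+1)x(t)'$ restricted to the stable subspace, and whose unstable blocks become summable once divided by $|\overline\lambda_i|^{n}$. Since the noise need not even possess a moment generating function when $\tailexp<1$, the sub-Gaussian method of mixtures is unavailable; instead I would truncate each coordinate $w_i(t)$ at a level growing like $(\log n)^{1/\tailexp}$, so that Assumption \ref{tailcondition} renders the discarded tail mass summable, apply a Freedman/Bernstein-type self-normalized inequality (with a covering of the direction sphere) to the bounded truncated part, and bound the truncation remainder deterministically. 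This is the origin of the $(\log n)^{4/\tailexp}$ prefactor and the $(-\log\delta)^{1+4/\tailexp}$ term in the sample-size requirement \eqref{generalcasecondition}.

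\emph{Combination and main obstacle.} On the intersection of the two events — which a union bound keeps within probability $\delta$ after absorbing constants into $\ssconstant_{}$ — the displayed inequality yields $\Mnorm{\esttrans{n}-\trans{}}{2}\lesssim \bigl((\log n)^{4/\tailexp}\bigl((-\log\delta)^{1+4/\tailexp}-\log\innerproductmin{}{\delta}\bigr)/n\bigr)^{1/2}$, so the choice $n\ge\samplesize{\ref{consistency}}{\epsilon}{\delta}$ dictated by \eqref{generalcasecondition} forces the right-hand side below $\epsilon$. The crux is the simultaneous handling of two incompatible scales: the exponentially growing unstable block must be tamed purely through the random-matrix anti-concentration of Proposition \ref{PDcore}, the linearly growing stable block supplies the $1/\sqrt n$ decay, and the heavy-tailed martingale bound must survive the \emph{same} normalization $\Gamma_n$ and hold on the same event. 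Decoupling these scales — rather than collapsing them into a single self-normalized quantity — is precisely what makes the unstable case genuinely harder than the classical stable analysis.
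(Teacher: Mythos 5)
Note first that the paper does not actually prove Theorem \ref{consistency}: it imports the result from \cite{ourpaper} and only sketches the ingredients in the surrounding text (the closed form $\esttrans{n}=\sum_{t}x(t+1)x(t)'V_n^{-1}$, the normalized random matrix $\tilde V_n$, the anti-concentration bound $\innerproductmin{}{\delta}$ of Proposition \ref{PDcore}, and the constant $\ssconstant$ absorbing the noise and spectral parameters). Your outline follows exactly that blueprint, and your diagnosis of why the standard self-normalized bound fails (both $\log\det V_n$ and $\eigmin{V_n}$ scale linearly, so their ratio is $\order{1}$) is the right motivation for the two-sided normalization by $\Gamma_n$.

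As a proof, however, the outline has two genuine gaps. First, the factorization $V_n=\Gamma_n\tilde V_n\Gamma_n'$ with a block-diagonal $\Gamma_n$ does not by itself reduce $\eigmin{\tilde V_n}$ to the separate stable and unstable blocks: the off-diagonal blocks of $\tilde V_n$, i.e.\ the normalized cross-covariances between the stable and unstable components of $x(t)$, must be shown to be asymptotically negligible (or uniformly dominated) before Proposition \ref{PDcore} and the law-of-large-numbers behaviour of the stable block can be combined into a single lower bound on $\eigmin{\tilde V_n}$; this interaction term is where much of the work in \cite{ourpaper} lies and you do not address it. Second, the numerator bound is asserted rather than proved: the increments $w(t+1)x(t)'$ have conditional second moments growing exponentially along the unstable subspace, and your claim that the unstable columns ``become summable once divided by $|\overline\lambda_i|^{n}$'' needs care because regularity only restricts the \emph{geometric} multiplicity, so an unstable eigenvalue may still carry a Jordan block of size larger than one, in which case the correct normalization involves polynomial corrections to $|\overline\lambda_i|^{n}$ and the truncation-plus-Freedman argument must be carried out against that matrix normalization, not a scalar one. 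Relatedly, you attribute the role of regularity to making $\Gamma_n$ ``well posed''; its actual role (encapsulated in Proposition \ref{PDcore}) is to guarantee that the limiting random matrix for the unstable block is almost surely nonsingular --- with two Jordan blocks for the same unstable eigenvalue the normalized Gram matrix degenerates and $\innerproductmin{}{\delta}$ would vanish. Since you do invoke Proposition \ref{PDcore}, this last point is an imprecision rather than an error, but the first two items must be filled in before the sketch constitutes a proof.
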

Hence, by \eqref{generalcasecondition}, the probability $\delta$ of having an identification error of magnitude $\epsilon$, decays exponentially fast when $n$ grows. In the next section, we show that one can satisfy the assumptions of Theorem \ref{consistency} by applying random linear feedbacks to a stabilizable system with unknown dynamics parameters. 
\section{Stabilization Algorithm} \label{Stabilizing the System}
Although the true parameter $\para_0$ is unknown, according to Lemma \ref{SNeighborhood}, a stabilizing linear feedback $\Lmatrix{\para}$ can be designed, if one can find a stabilizing neighborhood $\paraspace{0}$, such that
\begin{equation} \label{algo1eq0}
\paraspace{0} \subset \left\{ \para \in \R^{p \times q}: \Mnorm{\para - \para_0}{2} \leq \epsilon_0 \right\}.
\end{equation}
Using Theorem \ref{consistency}, we establish that $\paraspace{0}$ can be estimated if one applies a random linear feedback to the system. Since in Theorem \ref{consistency} the closed-loop transition matrix needs to be regular with no eigenvalue of unit size, first we need to show that these conditions can be satisfied. Lemma \ref{regularityproof}, and Lemma \ref{unitrootexclusion} accomplish this, with no knowledge beyond stabilizability of $\left[A_0,B_0\right]$. Based on the properties of the distribution of a random linear feedback matrix $L$, the above lemmas provide general statements, which hold almost surely. Then, we present a finite time stabilizing algorithm, and prove that it will provide us the desired stabilizing neighborhood. To proceed, we define the following classes of probability distributions over real valued vectors and matrices.
\begin{deff}[Full rank distributions] \label{fullrankdistribution}
	Let $X$ be a random vector in $\R^m$. $X$ has a linearly full rank distribution if for any arbitrary hyperplane $\mathcal{P} \subset \R^m$, it holds that $\PP{X \in \mathcal{P}}=0$. Further, $X$ has a general full rank distribution, if for every manifold $\mathcal{M} \subset \R^m$ such that $\dimension{\R}{\mathcal{M}} \leq m-1$, it holds that $\PP{X \in \mathcal{M}}=0$.
\end{deff}
The following example illustrates the difference between the two types of full rank distributions defined above.
\begin{example}
	Let $Z \in \R^p$ be normally distributed, $Z \sim \mathcal{N}\left(\mu,\Sigma\right)$, with arbitrary mean $\mu \in \R^p$, and positive definite covariance matrix $\Sigma \in \R^{p \times p}$. Then, $Z$ has a general full rank distribution. Letting $Y=\left(Z/\norm{Z}{2}\right) \boldsymbol{1}_{\left\{ Z \neq 0 \right\}}$, 
	the random vector $Y$ has a linearly full rank distribution, but since it lives on the unit sphere, $Y$ does not have a general full rank distribution.
\end{example}
Random linear feedbacks with full rank distributions induce the desired properties to the closed-loop transition matrix, as we rigorously establish below. 
\begin{lem}[Closed-loop Regularity] \label{regularityproof}
	Assume $\left[A_0,B_0\right]$ is stabilizable. Let the columns of $L \in \R^{r \times p}$ be independent (but not necessarily identically distributed), with linearly full rank distributions. The matrix $A_0+B_0L$ is regular, with probability one.
\end{lem}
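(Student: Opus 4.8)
The plan is to show that, almost surely, no eigenvalue of $A_0+B_0L$ of magnitude larger than one can have geometric multiplicity two or more. Recall from Definition \ref{regularity} that regularity is equivalent to $\rank{(A_0+B_0L)-\lambda I_p} \geq p-1$ for every $\lambda \in \C$ with $\left|\lambda\right|>1$. The failure of regularity means that for some such $\lambda$ the matrix $(A_0+B_0L)-\lambda I_p$ drops rank by at least two, i.e. it has a two-dimensional (right) kernel. So the strategy is to argue that the event ``there exists $\lambda$, $\left|\lambda\right|>1$, with $\dim \ker\bigl((A_0+B_0L)-\lambda I_p\bigr) \geq 2$'' has probability zero. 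The key leverage is that the columns of $L$ are independent with linearly full rank distributions, so I can condition on all but one column and use the anti-concentration property (every hyperplane has probability zero) on the remaining column.

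First I would set up the linear-algebraic characterization of the bad event. Suppose $\lambda$ is an eigenvalue with geometric multiplicity at least two: then there exist two linearly independent vectors $v_1,v_2 \in \C^p$ with $(A_0+B_0L)v_i = \lambda v_i$, equivalently $A_0 v_i - \lambda v_i = -B_0 L v_i$. The crucial observation is how $L$ enters: the action of $L$ on the two-dimensional space $W=\mathrm{span}\{v_1,v_2\}$ is constrained. Writing $L=[\ell_1,\dots,\ell_p]$ column-by-column and expanding $L v_i = \sum_j (v_i)_j \ell_j$, the eigen-relations become linear equations in the entries of the columns $\ell_j$, with coefficients determined by the fixed matrices $A_0,B_0$ and the (unknown) complex numbers $\lambda,v_1,v_2$. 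The idea is that requiring a \emph{two}-dimensional invariant/eigen-structure over-determines $L$ and forces its columns to lie in a lower-dimensional (measure-zero) set.

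The main technical step, and the part I expect to be the hardest, is to pass from the uncountable family of potential eigenvalues $\lambda$ and eigenvectors $v_1,v_2$ to a countable (or otherwise measure-theoretically tractable) collection of events, since a naive union over all $\lambda \in \C$ is uncountable. I would handle this by parametrization: the offending data $(\lambda, W)$ with $\left|\lambda\right|>1$ and $\dim W = 2$ ranges over a finite-dimensional real-analytic parameter space (a piece of $\C \times \mathrm{Gr}(2,p)$). For each fixed value of this parameter, the constraint on $L$ confines at least one column $\ell_j$ (indeed I can pick an index $j$ for which the relevant coefficient vector $(v_1)_j,(v_2)_j$ is nonzero, which is possible since $v_1,v_2$ are independent) to a proper affine subspace, hence a hyperplane; by the linearly full rank assumption and independence of the columns, conditioning on the other columns gives probability zero for that single parameter value. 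To remove the continuum of parameters I would argue that the overall bad event is contained in the projection of an analytic variety of positive codimension in the joint $(L,\lambda,W)$-space, and invoke either a Fubini/conditioning argument over a measure-zero slice structure or the fact that a polynomial identity which fails on a full-measure set of columns cannot hold; a clean way is to note that the set of $L$ admitting \emph{any} such $(\lambda,W)$ is a real-algebraic set, and to exhibit one $L$ (outside a null set) for which it fails, so that it is a proper subvariety and hence Lebesgue-null, whence probability zero by the full rank distribution.

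Finally I would assemble the pieces: almost surely $A_0+B_0L$ has no eigenvalue of magnitude exceeding one with geometric multiplicity at least two, which is precisely regularity by Definition \ref{regularity}. Throughout, stabilizability of $[A_0,B_0]$ guarantees that $L$ ranges over a set rich enough that the relevant coefficient matrices are generically of full rank, so the ``pick a good column index $j$'' step does not degenerate; making this nondegeneracy quantitative and uniform over the parameter space is the remaining bookkeeping, but it is routine once the single-column conditioning argument is in place.
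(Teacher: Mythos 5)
Your high-level reduction is right --- irregularity means $\rank{(A_0+B_0L)-\lambda I_p}\leq p-2$ for some $|\lambda|>1$, and the natural tool is to condition on all but one column of $L$ and use the hyperplane anti-concentration built into linear full-rankness --- but the step you yourself flag as hardest is exactly where the argument has a genuine gap. Your ``clean way'' is to show that the set of bad $L$ is a proper real-algebraic subvariety of $\R^{r\times p}$, hence Lebesgue-null, ``whence probability zero by the full rank distribution.'' That inference is not available under the stated hypothesis: a linearly full rank distribution only annihilates hyperplanes, not arbitrary null sets or proper algebraic varieties. The paper's own Example (the normalized Gaussian $Y=Z/\norm{Z}{2}$) is a linearly full rank distribution carried entirely by the proper algebraic variety $\left\{\norm{y}{2}=1\right\}$, so ``proper subvariety $\Rightarrow$ probability zero'' fails for this distribution class. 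Your fallback (``Fubini over a measure-zero slice structure'') is not yet an argument either, since an uncountable union of hyperplanes indexed by $\lambda$ need not be null and is not controlled by the hypothesis.

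The missing idea, which is the heart of the paper's proof, is that a rank drop of \emph{two} lets you spend the two degeneracies asymmetrically: the first linear-dependence relation (the column $Y_{m-1}-M_{m-1}$ lying in the span of the earlier columns) forces a determinant --- a nonzero polynomial in $\lambda$ divided by $f(\lambda)$ --- to vanish, which pins $\lambda$ down to \emph{finitely many} values that are deterministic functions of the first $m-1$ columns; only then is the second degeneracy used, placing the last, independent column in one of finitely many subspaces of dimension at most $m-1$, an event of probability zero by linear full-rankness. This converts the continuum over $\lambda$ into a finite union and uses nothing beyond the hyperplane property. Separately, your proposal does not engage with the case $\rank{B_0}=m<p$: there the columns of $B_0L$ are confined to an $m$-dimensional column space and the direct argument on $A_0+B_0L$ breaks down. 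The paper handles this by permuting rows, splitting off a deterministic $(p-m)\times p$ block built from a stabilizer $L_0$ (this is where stabilizability is actually used: $D_0-\lambda I_p$ is invertible for $|\lambda|\geq 1$, so that block has full rank), and running the polynomial argument on the Schur complement $X_{11}-X_{12}X_{22}^{-1}X_{21}$. Your stated role for stabilizability (``$L$ ranges over a set rich enough'') is not what the hypothesis does in the proof.
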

\begin{proof}[\bf \edit{Proof of Lemma \ref{regularityproof}}]
	Let the event $\event{G}$ be that $D=A_0+B_0L$ is irregular. We prove that for all $\lambda \in \C$, $\left|\lambda\right|\geq 1$, with probability one, $\rank{D-\lambda I_p} \geq p-1$.
	Note that according to the discussion after Definition \ref{regularity}, this implies $\PP{\event{G}}=0$. 
	
	First, let $Y_i \in \R^m,i=1,\cdots,m$ have linearly full rank distributions. Define $Y=\left[Y_1,\cdots, Y_m\right]$, and let $M\left(\lambda\right)$ be a $m \times m$ matrix, with all coordinates being real polynomials of $\lambda$. Let $f\left(\lambda\right)$ be a real polynomial of $\lambda$ as well. We show that 
	\begin{equation} \label{regularityeq1}
	\PP{\exists \lambda \in \C, f\left(\lambda\right) \neq 0 : \rank{Y-\frac{M\left(\lambda\right)}{f\left(\lambda\right)}}<m-1}=0.
	\end{equation}
	If $\rank{Y-M\left(\lambda_0\right)/f\left(\lambda_0\right)}<m-1$, letting $M\left(\lambda_0\right)/f\left(\lambda_0\right)=\left[M_1,\cdots,M_m\right]$, two of the vectors $Y_i-M_i, i=1,\cdots,m$, such as $Y_{m-1}-M_{m-1}, Y_m-M_m$, can be written as linear combinations of the others. There are finitely many values of $\lambda_0$ for which $Y_{m-1}-M_{m-1}$ is a linear combination of $ Y_1-M_1, \cdots, Y_{m-2}-M_{m-2}$, since for every such a $\lambda_0$, $\det \left(\tilde{Y}\right)=0$, where $\tilde{Y}$ is the square matrix whose columns are $ Y_1-M_1, \cdots, Y_{m-1}-M_{m-1}$, removing an arbitrary row. Note that $\det \left(\tilde{Y}\right)$ is a polynomial of $\lambda_0$, divided by $f\left(\lambda_0\right)$, and $f\left(\lambda_0\right) \neq 0$.
	
	Note that $\lambda_0$ is a deterministic function of $Y_1, \cdots, Y_m$. For every such $\lambda_0$, the dimension of the subspace $\mathcal{P}$ spanned by $ Y_1-M_1, \cdots, Y_{m-2}-M_{m-2}, M_m$ is at most $m-1$. Because $Y_m$ is independent of $ Y_1, \cdots, Y_{m-1}$, and $Y_m$ has a linearly full rank distribution, $\PP{Y_m \in \mathcal{P}}=0$; i.e. \eqref{regularityeq1} holds.
	
	Now, let $m=\rank{B_0}$. If $m=p$, applying the above argument to $Y=D, M\left(\lambda\right)=\lambda I_p, f\left(\lambda\right)=1$,
	we have $\PP{\event{G}}=0$, since full rankness of $B_0$ implies linearly full rank distributions for all columns of $B_0L$. If $m<p$, there is a $p \times p$ permutation matrix $J$, and $K \in \R^{\left(p-m\right) \times m}$, such that
	$JB_0=\begin{bmatrix}
	\tilde{B} \\ K \tilde{B}
	\end{bmatrix}= \begin{bmatrix}
	I_m \\ K
	\end{bmatrix} \tilde{B}$,
	where $\tilde{B} \in \R^{m \times r}$ is full rank. Let $L_0$ be a stabilizer, $D_0=A_0+B_0L_0$, and $JD_0= \begin{bmatrix} D_1 \\ D_2\end{bmatrix}, D_1 \in \R^{m \times p}, D_2 \in \R^{\left(p-m\right)\times p}$,
	to get
	$J \left(A_0+B_0L\right)= \begin{bmatrix}
	D_1 + \tilde{B}\left(L-L_0\right) \\ D_2 + K \tilde{B}\left(L-L_0\right)
	\end{bmatrix}$.
	Writing 
	$J= \begin{bmatrix} J_1 \\ J_2\end{bmatrix}, J_1 \in \R^{m \times p}, J_2 \in \R^{\left(p-m\right)\times p}$, 
	we have
	\begin{eqnarray*}
		&& \rank{A_0+B_0L - \lambda I_p}\\
		&=& \rank{\begin{bmatrix}
				I_m & 0_{m \times \left(p-m\right)} \\
				-K & I_{p-m}
			\end{bmatrix} \left(J \left(A_0+B_0L\right)-\lambda J\right)} \\
		&=& \rank{\begin{bmatrix}
				D_1 + \tilde{B}\left(L-L_0\right) - \lambda J_1 \\
				\left[-K,I_{p-m}\right] J \left(D_0-\lambda I_p \right)
		\end{bmatrix}}.
	\end{eqnarray*}
	
	Denote the last matrix above by $\tilde{X}$. Since $\eigmax{D_0}<1$, for $\left|\lambda\right| \geq 1$ the matrix $D_0-\lambda I_p$ is full rank. Therefore, because of $\rank{\left[-K,I_{p-m}\right]}=p-m$,
	we have $\rank{\left[-K,I_{p-m}\right] J \left(D_0-\lambda I_p \right)} = p-m$. 
	
	Rearrange the columns of matrix $\tilde{X}$ to get 
	$X=\begin{bmatrix} X_{11} & X_{12} \\ X_{21} & X_{22}\end{bmatrix}$,
	such that $X_{11} \in \C^{m \times m}, X_{22} \in \C^{\left(p-m\right) \times \left(p-m\right)}, \rank{X_{22}}=p-m$. In other words, $p-m$ linearly independent columns of $\left[-K,I_{p-m}\right] J \left(D_0-\lambda I_p \right)$ have been put together to form $X_{22}$. If $D$ is not regular, 
	\begin{eqnarray*}
		p-2 &\geq& \rank{\tilde{X}} = \rank{X} \\
		&=& \rank{X \begin{bmatrix}
				I_m & 0_{m \times \left(p-m\right)} \\
				-X_{22}^{-1}X_{21} & I_{p-m}
		\end{bmatrix}} \\
		&=& \rank{\begin{bmatrix}
				X_{11}-X_{12}X_{22}^{-1}X_{21} & X_{12} \\
				0_{\left(p-m\right) \times m} & X_{22}
		\end{bmatrix}}.
	\end{eqnarray*}				
	
	Hence, $\rank{X_{11}-X_{12}X_{22}^{-1}X_{21}} \leq m-2$.
	Recall that columns of $\left[X_{11},X_{12}\right]$ are exactly the same as $D_1 + \tilde{B}\left(L-L_0\right) - \lambda J_1$, and all coordinates of $\det \left(X_{22}\right) X_{12}X_{22}^{-1}X_{21}$ are polynomials of $\lambda$ (since all coordinates of $\det \left(X_{22}\right) X_{22}^{-1}$ are polynomials of the coordinates of $X_{22}$). Taking $ f\left(\lambda\right)= \det \left(X_{22}\right)$,
	by \eqref{regularityeq1}, since full rankness of $\tilde{B}$ implies linearly full rank distributions for all columns of $\tilde{B}\left(L-L_0\right)$, we have $\PP{\rank{X_{11}-X_{12}X_{22}^{-1}X_{21}} \leq m-2} =0$,
	which is the desired result since $\rank{X_{22}}=p-m$. 
\end{proof}

If the distribution of linear feedback $L$ is generally full rank, the following results shows that $A_0+B_0L$ has no eigenvalue on the unit circle of the complex plane.
\begin{lem}[Closed-loop Eigenvalues] \label{unitrootexclusion}
	Assume $\left[A_0,B_0\right]$ is stabilizable. Let $L \in \R^{r \times p}$ have a general full rank distribution over $\R^{r \times p}$. With probability one, $A_0+B_0L$ has no unit size eigenvalue.
\end{lem}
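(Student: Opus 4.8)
The plan is to show that the set of ``bad'' feedbacks
$$S = \left\{ L \in \R^{r\times p} : A_0 + B_0 L \text{ has an eigenvalue of unit modulus}\right\}$$
is contained in the zero set of a single polynomial $h$ in the $rp$ entries of $L$ that does not vanish identically. Once this is done, $\{h = 0\}$ is a proper real-algebraic subset of $\R^{rp}$ and hence a finite union of smooth manifolds each of dimension at most $rp - 1$, so the general full rank distribution of $L$ (Definition \ref{fullrankdistribution}) assigns it probability zero by a union bound, giving $\PP{L \in S} = 0$.

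To build $h$, I would write $\chi_L(z) = \det\left(z I_p - (A_0 + B_0 L)\right)$ for the characteristic polynomial of $D = A_0 + B_0 L$; it is monic of degree $p$ with real coefficients that are polynomials in the entries of $L$. Let $\chi_L^\star(z) = z^p \chi_L(1/z)$ denote the reversed polynomial, whose coefficients are those of $\chi_L$ in reverse order, hence also real polynomials in $L$. I define $h(L)$ to be the Sylvester resultant of $\chi_L$ and $\chi_L^\star$ (with formal degrees $p,p$); as the determinant of a $2p \times 2p$ matrix whose entries are coefficients of $\chi_L$ and $\chi_L^\star$, it is a genuine polynomial in the entries of $L$.

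The two required properties are: (i) a unit-modulus eigenvalue forces $h(L) = 0$, and (ii) $h$ is not the zero polynomial. For (i), if $z_0$ is an eigenvalue of $D$ with $|z_0| = 1$ then $\chi_L(z_0) = 0$; because $\chi_L$ has real coefficients, $\bar z_0 = 1/z_0$ is also a root, so $\chi_L^\star(z_0) = z_0^p \chi_L(1/z_0) = z_0^p \chi_L(\bar z_0) = 0$. Thus $z_0$ is a common root of $\chi_L$ and $\chi_L^\star$ (the real cases $z_0 = \pm 1$ are checked directly), and since $\chi_L$ is monic the Sylvester determinant vanishes whenever the two polynomials share a root; hence $h(L) = 0$. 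For (ii), stabilizability (Definition \ref{stabilizability}) supplies a stabilizer $L_0$ with $\eigmax{A_0 + B_0 L_0} < 1$: all roots of $\chi_{L_0}$ then lie strictly inside the unit circle, while the (finite) roots of $\chi_{L_0}^\star$ are their reciprocals and lie strictly outside, so the two polynomials have no common root and $h(L_0) \neq 0$.

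The main obstacle I expect is bookkeeping rather than conceptual: I must verify the ``Sylvester determinant vanishes iff common root'' equivalence in the degenerate case where $0$ is an eigenvalue of $D$ (so that $\chi_L^\star$ drops degree). This is handled by noting that the leading coefficient of the monic $\chi_L$ never vanishes, so the formal-degree Sylvester resultant still equals $\prod_\alpha \chi_L^\star(\alpha)$ over the eigenvalues $\alpha$ of $D$, and for $L_0$ this product is nonzero precisely because each factor $\chi_{L_0}^\star(\alpha)$ evaluates a polynomial with all roots of modulus greater than one at a point $\alpha$ of modulus less than one (with $\alpha = 0$ giving the constant term $1$). The remaining point to state carefully is the structural fact that the real zero set of a nonzero polynomial in $rp$ variables stratifies into finitely many smooth manifolds of dimension at most $rp - 1$, which is exactly what lets me invoke Definition \ref{fullrankdistribution}.
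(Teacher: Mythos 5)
Your proposal is correct, and it takes a genuinely different route from the paper. The paper's proof first reduces to the reduced feedback $X=\tilde{B}\left(L-L_0\right)\in\R^{m\times p}$ via a rank factorization of $B_0$ and Sylvester's determinant identity, then parametrizes the bad event by pairs $(\lambda,v)$ of a unit-modulus scalar and an eigenvector direction in $S=\C^m/\sim$, and performs an explicit dimension count ($1+(2m-2)+m(p-2)=mp-1$ on the generic stratum, with a separate stratum where $a(v),b(v)$ are collinear, which is shown to force $\Im(\lambda)=0$). You instead encode the bad event into the vanishing of a single polynomial $h(L)$, the resultant of $\chi_L$ with its reversal $\chi_L^\star$, note that a unit-modulus eigenvalue forces a common root (using realness of the coefficients and $\bar z_0=1/z_0$), and certify $h\not\equiv 0$ by evaluating at a stabilizer $L_0$, where the roots of $\chi_{L_0}$ lie strictly inside the unit circle and those of $\chi_{L_0}^\star$ strictly outside. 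Your handling of the degenerate case (zero eigenvalues of $D$, so $\chi_L^\star$ drops degree) via the product formula $\prod_\alpha\chi_L^\star(\alpha)$ for monic $\chi_L$ is exactly the right care to take. What your approach buys: it is shorter, avoids the eigenvector bookkeeping and the $S_1/S_2$ case split entirely, and works directly in $L$-space, so you never need the paper's (asserted) step that $X=\tilde B(L-L_0)$ inherits a general full rank distribution from $L$. The costs are that you lean on two standard but nontrivial external facts, namely the resultant/common-root equivalence and the stratification of a proper real algebraic set into finitely many smooth submanifolds of dimension at most $rp-1$ (needed to invoke Definition \ref{fullrankdistribution}), whereas the paper's argument is self-contained linear algebra plus dimension counting, in a style shared with the companion proof of Lemma \ref{regularityproof}. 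Note also that your set $\left\{h=0\right\}$ strictly contains the bad set (it captures any $L$ for which $D$ has two eigenvalues with product one), but this is harmless since only the containment and properness are used.
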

\begin{proof}[\bf \edit{Proof of Lemma \ref{unitrootexclusion}}]
	Assume $D=A_0+B_0L$ has a unit-root eigenvalue, denoted by $\lambda \in \C, \left| \lambda \right|=1$. Further, assume that $m=\rank{B_0}$, and let the permutation matrix $J$ and the matrix $K \in \R^{\left(p-m\right) \times m}$ be such that
	\begin{equation*}
	JB_0=\begin{bmatrix}
	\tilde{B} \\ K \tilde{B}
	\end{bmatrix}= \begin{bmatrix}
	I_m \\ K
	\end{bmatrix} \tilde{B},
	\end{equation*} 
	where $\tilde{B} \in \R^{m \times r}$ is full rank. Letting $L_0$ be a stabilizer, $D_0=A_0+B_0L_0$, and $X=\tilde{B} \left(L-L_0\right) \in \R^{m \times p}$, note that $X$ has a general full rank distribution, thanks to the full-rankness of $\tilde{B}$. Since $D_0$ is stable, $\det \left(D_0 - \lambda I_p\right) \neq 0$, and
	\begin{eqnarray*}
		0 &=& \det \left(A_0+B_0L - \lambda I_p\right) \\
		&=& \det \left(JD_0 + \begin{bmatrix} I_m \\ K \end{bmatrix} X - \lambda J\right) \\
		&=& \det \left( \left(D_0 - \lambda I_p\right)^{-1} J^{-1} \begin{bmatrix} I_m \\ K \end{bmatrix} X + I_p \right) \\
		&=& \det \left( X \left(D_0 - \lambda I_p\right)^{-1} J^{-1} \begin{bmatrix} I_m \\ K \end{bmatrix} + I_m \right),
	\end{eqnarray*}
	where the last equality above is implied by Sylvester's determinant identity. Denote the complex conjugate of $\lambda$ by $\bar{\lambda}$, and define the real matrix
	\begin{equation*}
	M \left({\lambda}\right) = M \left(\bar{\lambda}\right)= \left(D_0 - \bar{\lambda} I_p\right)^{-1} \left(D_0 - \lambda I_p\right)^{-1} J^{-1} \begin{bmatrix} I_m \\ K \end{bmatrix}.
	\end{equation*}
	Further, define the space of eigenvectors in $\C^m$ as follows. First, consider the relation $\sim$ on $\C^m$, defined as $$x \sim y \text{, if } x=cy \text{ for some } c \in \C, c \neq 0.$$ 
	Since $\sim$ is an equivalence relation, for the set of equivalence classes denoted by $S= \frac{\C^m }{\sim}$ (which is the direction space in $\C^m$) we have $\dimension{\C}{S}=m-1$; i.e. $\dimension{\R}{S}=2m-2$.
	
	Note that for every matrix $Y \in \C^{m \times m}$ and every vector $v \in \C^m$, $Yv=0$ if and only if $Y \tilde{v}=0$ for every $\tilde{v} \sim v$. Thus, $\det \left( X \left(D_0 - \lambda I_p\right)^{-1} J^{-1} \begin{bmatrix} I_m \\ K \end{bmatrix} + I_m \right)=0$ implies that there is $v \in S, v \neq 0$, such that
	\begin{equation} \label{unitrootexclusioneq1}
	\left( X \left(D_0 - \bar{\lambda} I_p\right) M \left(\lambda\right) + I_m \right)v=0
	\end{equation}
	Denote the set of all matrices $X$ satisfying \eqref{unitrootexclusioneq1} by $\mathcal{X} \left(\lambda, v \right) \subset \R^{m \times p}$. Separating the real ($\Re$) and imaginary ($\Im$) parts, we get $X a(v) = \Re\left(v\right)$, $X b(v) = \Im \left(v\right)$, where for $v \in S$, the vectors $a(v),b(v) \in \R^p$ are defined as
	\begin{eqnarray*}
		a(v) &=&  M \left(\lambda\right) \Re\left(\bar{\lambda} v\right) - D_0 M\left(\lambda\right) \Re\left(v\right), \\
		b(v) &=& M \left(\lambda\right) \Im\left(\bar{\lambda} v\right) - D_0 M\left(\lambda\right) \Im\left(v\right).
	\end{eqnarray*}
	Next, we partition $S$ to $S_1, S_2$; i.e. $S=S_1 \cup S_2, S_1 \cap S_2 = \emptyset$, where
	\begin{eqnarray*}
		S_1 &=& \{ v \in S : a(v), b(v) \text{ are in-line } \}, \\
		S_2 &=& \{ v \in S : a(v), b(v) \text{ are not in-line } \}. 
	\end{eqnarray*}
	Whenever $v \in S_2$, for $j=1,\cdots,m$, the $j$-th row of $X$ needs to be in the intersection of two nonparallel hyperplanes $\mathcal{P}_1,\mathcal{P}_2 \subset \R^p$, where
	\begin{eqnarray*}
		\mathcal{P}_1 &=& \left\{ y \in \R^p: y'a(v)=\Re(v_j) \right\}, \\
		\mathcal{P}_2 &=& \left\{ y \in \R^p: y'b(v)=\Im(v_j) \right\}.
	\end{eqnarray*}
	Since $\dimension{\R}{\mathcal{P}_1} \leq p-1$, $\dimension{\R}{\mathcal{P}_2} \leq p-1$, and $v \in S_2$, we have $ \dimension{\R}{\mathcal{P}_1 \cap \mathcal{P}_2} \leq p-2$.
	Therefore, for $v \in S_2$, we have $\dimension{\R}{\mathcal{X} \left(\lambda, v \right)} \leq m(p-2)$. Since $\dimension{\R}{\left| \lambda \right|=1}=1$, using $\dimension{\R}{S_2} \leq 2m-2$ we have
	\begin{equation} \label{unitrooteq1}
	\dimension{\R}{\mathcal{Z}_1} \leq 1+ 2m-2+ m(p-2)= mp-1,
	\end{equation}
	where $\mathcal{Z}_1 = \bigcup\limits_{\left|\lambda\right|=1, v \in S_2} \mathcal{X} \left(\lambda,v\right)$.
	
	On the other hand, for $v \in S_1$, there is a real number, say $\varphi(v)$, such that $b(v)=\varphi(v) a(v)$. Then,
	\begin{equation} \label{unitrooteq2}
	\Im \left(v\right) = X b(v) = \varphi(v) X a(v) = \varphi(v) \Re \left(v\right),
	\end{equation} 
	i.e. whenever $v \in S_1$, the vectors $\Re(v), \Im(v)$ are in-line. So, $\dimension{\R}{S_1}=m-1$, and for $v \in S_1$, we have $\mathcal{P}_1 = \mathcal{P}_2$, i.e. $\dimension{\R}{\mathcal{X} \left(\lambda, v \right)} \leq m(p-1)$.
	After doing some algebra, we obtain
	\begin{eqnarray*}
		0 &=& \varphi(v) a(v) - b(v) \\
		&=& \varphi(v) \left(\Re\left(\lambda\right)I_p+ \varphi(v)\Im\left(\lambda\right)I_p - D_0\right)M\left(\lambda\right)\Re\left(v\right) \\
		&-& \left(\varphi(v)\Re\left(\lambda\right)I_p - \Im\left(\lambda\right)I_p - \varphi(v)D_0\right)M\left(\lambda\right)\Re\left(v\right) \\
		&=& \left(1+\varphi(v)^2\right) \Im \left(\lambda\right) M\left(\lambda\right)\Re\left(v\right),
	\end{eqnarray*}
	i.e. either $\Im \left(\lambda\right)=0$, or $M\left(\lambda\right)\Re\left(v\right)=0$. According to the definition of $M\left(\lambda\right)$, the latter case implies $\Re\left(v\right)=0$, which due to \eqref{unitrooteq2} leads to $v=0$, and is impossible. So,
	by $\dimension{\R}{\left| \lambda \right|=1, \Im(\lambda)=0}=0$, we have
	\begin{equation} \label{unitrooteq3}
	\dimension{\R}{\mathcal{Z}_2} \leq m-1+m(p-1) = mp-1,
	\end{equation}
	where $\mathcal{Z}_2 = \bigcup\limits_{\left|\lambda\right|=1, \Im\left(\lambda\right)=0} \mathcal{X} \left(\lambda,v\right)$. Writing $\mathcal{X}= \bigcup\limits_{\left|\lambda\right|=1, v \in S} \mathcal{X} \left(\lambda,v\right) \subset \mathcal{Z}_1 \cup \mathcal{Z}_2$,
	according to \eqref{unitrooteq1}, \eqref{unitrooteq3} we have $\dimension{\R}{\mathcal{X}} \leq mp-1$, and by general full-rankness of the distribution of $X$, the desired result holds: $\PP{\mathcal{X}}=0$.
\end{proof}

Subsequently, an algorithmic procedure to find a stabilizing neighborhood will be presented based on random linear feedbacks discussed above. First, letting $k=1+\lceil \frac{r}{p} \rceil$,  draw the columns of $L_1, \cdots, L_k \in \R^{r \times p}$ from independent standard Gaussian distributions $\mathcal{N}\left(0,I_r\right)$. Note that because of independence, for all $i=1,\cdots,k$, the random feedback $L_i$ has a general full rank distribution $\R^{r \times p}$. Lemma \ref{regularityproof} and Lemma \ref{unitrootexclusion} show that the conditions of Theorem \ref{consistency} hold. Therefore, every closed-loop transition matrix $D^{(i)}=A_0+B_0L_i$ can be estimated arbitrarily accurate. We show how to find a high probability confidence set for $\para_0$, using the accurate estimates of $D^{(1)},\cdots, D^{(k)}$. 
\begin{algorithm}
	\caption{{\bf: Adaptive Stabilization} \\ Output: Stabilizing Set $\paraspace{0}$ } \label{stabilizationalgo}
	\begin{algorithmic}
		\State Let $k=1+\lceil \frac{r}{p} \rceil, \tau_0=0$
		\For{$i=1,\cdots,k$}
		\For{$j=1,\cdots,p$}
		\State Draw column $j$ of $L_i$ from $\mathcal{N}\left(0,I_r\right)$, independently
		\EndFor
		\EndFor
		\State Define $M, \tilde{\epsilon}$ according to \eqref{algo1eq1}, \eqref{algo1eq2}, respectively
		\For{$i=1,\cdots,k$}
		\State Define $\tau_i$ by \eqref{algo1eq3}
		\While{$t < \tau_i$}
		\State Apply control action $u(t)=L_i x(t)$
		\EndWhile
		\State Estimate $\hat{D}^{(i)}$ by \eqref{algo1eq5}
		\State Construct $\paraspace{i}$ by \eqref{algo1eq6}
		\EndFor
		\State \textbf{return} $\paraspace{0} = \bigcap \limits_{i=1}^k \paraspace{i}$
	\end{algorithmic}
\end{algorithm}

Letting $\epsilon_0$ be as Lemma \ref{SNeighborhood}, define the precision $\tilde{\epsilon}$ and the matrix $M$ containing all matrices $L_1, \cdots, L_k$ by
\begin{eqnarray} 
M &=& \begin{bmatrix} 
I_p & \cdots & I_p \\ L_1 & \cdots & L_k 
\end{bmatrix} \in \R^{q \times kp}, \label{algo1eq1} \\
\tilde{\epsilon} &=& \frac{\epsilon_0}{2k} \inf \left\{ \frac{\Mnorm{\para M}{2}}{\Mnorm{\para}{2}} : \para \in \R^{p \times q} \right\}. \label{algo1eq2}
\end{eqnarray}
As a matter of fact, since $kp \geq q$ and $L_1,\cdots,L_k$ are independent, we have $\rank{M}=q$, almost surely; i.e. $\PP{\tilde{\epsilon}>0}=1$. Further, if $\tilde{\epsilon}$ became too small once $L_1,\cdots,L_k$ are drawn, one can repeatedly draw the random feedbacks to avoid pathologically small values of $\tilde{\epsilon}$. 

\edit{Then, let $\tau_0=0$, and for $i=1,\cdots,k$ define the following:   
\begin{eqnarray}
\tau_i &=&\tau_{i-1}+ \samplesize{\ref{consistency}}{\tilde{\epsilon}}{\frac{\delta}{k}}, \label{algo1eq3} \\
\hat{D}^{(i)} &=& \arg\min \limits_{E \in \R^{p \times p}} \sum\limits_{t=\tau_{i-1}}^{\tau_i-1} \norm{x(t+1)- E x(t)}{2}^2, \label{algo1eq5} \\
\paraspace{i} &=& \left\{ \para \in \R^{p \times q}: \Mnorm{\para \begin{bmatrix} I_p \\ L_i \end{bmatrix} - \hat{D}^{(i)}}{2} \leq \tilde{\epsilon} \right\}, \label{algo1eq6}
\end{eqnarray}
where the sample size $\samplesize{\ref{consistency}}{\cdot}{\cdot}$ is given in \eqref{generalcasecondition}. Conceptually, $\tau_i$ is the time point when the control action changes, $\hat{D}^{(i)}$ is the least-squares estimate, and $\paraspace{i}$ is a confidence set for $\para_0$. In fact, for each $i=1,\cdots, k$, Algorithm \ref{stabilizationalgo} applies the linear feedback $u(t)=L_i x(t)$ during the time period $\tau_{i-1} \leq t < \tau_i$. Then, observing $\left\{ x(t) \right\}_{t=\tau_{i-1}}^{\tau_i-1}$, the algorithm uses $\hat{D}^{(i)}$ to estimate the true closed-loop matrix $D^{(i)}=A_0+B_0L_i = \para_0 \begin{bmatrix} I_p \\ L_i \end{bmatrix}$. Finally, the high probability confidence set $\paraspace{i}$ is constructed for the true parameter $\para_0$, according to $L_i$. Iterating the above procedure for all $1 \leq i \leq k$, the algorithm constructs $\paraspace{1}, \cdots, \paraspace{k}$, and returns $\paraspace{0}=\bigcap\limits_{i=1}^k \paraspace{i}$ as an stabilizing set. Below, we show that it satisfies \eqref{algo1eq0}.}

\edit{By Theorem \ref{consistency}, \eqref{algo1eq3} implies that $\Mnorm{\hat{D}^{(i)}-D^{(i)}}{2} \leq \tilde{\epsilon}$, with probability at least $1-\delta/k$. So, by \eqref{algo1eq6}, we have $\PP{\para_0 \notin \paraspace{i}} \leq \delta/k$; i.e. $\PP{\para_0 \notin \paraspace{0}} \leq \delta$. To show that $\paraspace{0}$ is a stabilizing set, let $\para_1 \in \paraspace{0}$ be arbitrary. On the event $\para_0 \in \paraspace{0}$, for all $i=1,\cdots,k$ we have $\Mnorm{\left( \para_1 - \para_0 \right) \begin{bmatrix} I_p \\ L_i \end{bmatrix} }{2} 
\leq 2 \tilde{\epsilon}$. Using the definition of $M$ in \eqref{algo1eq1}, the latter result leads to $\Mnorm{\left( \para_1 - \para_0 \right) M }{2} \leq 2k\tilde{\epsilon} $. Thus, \eqref{algo1eq2} implies the following for $\para_1 \neq \para_0$:
\begin{equation*}
\frac{2k \tilde{\epsilon}}{\Mnorm{\para_1 - \para_0 }{2}} \geq
\frac{\Mnorm{\left( \para_1 - \para_0 \right) M }{2}}{\Mnorm{\para_1 - \para_0 }{2}} \geq \frac{2k \tilde{\epsilon}}{\epsilon_0},
\end{equation*}
or equivalently $\Mnorm{\para_1 - \para_0 }{2} \leq \epsilon_0$, which is the desired inequality of \eqref{algo1eq0}. Note that since $\para_0 \in \paraspace{0}$, with probability at least $1-\delta$, the failure probability of Algorithm \ref{stabilizationalgo} is at most $\delta$. This completes the proof of the following result.} 
\begin{thm}[Stabilization] \label{stabilizationtheorem}
	Let $\paraspace{0}$ be the stabilizing set provided by Algorithm \ref{stabilizationalgo}. For arbitrary $\para \in \paraspace{0}$, we have
	\begin{equation*}
	\PP{\eigmax{\para_0 \extendedLmatrix{\para}}<1} \geq 1- \delta.
	\end{equation*}
\end{thm}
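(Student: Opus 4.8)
The plan is to decompose the claim into a probabilistic step and a deterministic step. The probabilistic step establishes that the true parameter $\para_0$ itself belongs to the returned set $\paraspace{0}$ with probability at least $1-\delta$; the deterministic step shows that, on this event, \emph{every} $\para \in \paraspace{0}$ is forced to satisfy $\Mnorm{\para - \para_0}{2} \le \epsilon_0$, after which Lemma \ref{SNeighborhood} immediately yields stability of $\para_0 \extendedLmatrix{\para}$.

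For the probabilistic step, I would first confirm that each closed-loop matrix $D^{(i)} = A_0 + B_0 L_i$ meets the hypotheses of Theorem \ref{consistency}. Since the columns of $L_i$ are independent standard Gaussian vectors, $L_i$ has a general full rank distribution, so Lemma \ref{regularityproof} and Lemma \ref{unitrootexclusion} guarantee, almost surely, that $D^{(i)}$ is regular and has no eigenvalue of unit modulus. Theorem \ref{consistency} therefore applies, and because the dwell time $\tau_i - \tau_{i-1} = \samplesize{\ref{consistency}}{\tilde{\epsilon}}{\delta/k}$ meets the sample-size requirement \eqref{generalcasecondition}, the least-squares estimate obeys $\Mnorm{\hat{D}^{(i)} - D^{(i)}}{2} \le \tilde{\epsilon}$ with probability at least $1 - \delta/k$. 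By the construction \eqref{algo1eq6} of $\paraspace{i}$ together with the identity $D^{(i)} = \para_0 \begin{bmatrix} I_p \\ L_i \end{bmatrix}$, this event is contained in $\{ \para_0 \in \paraspace{i} \}$, and a union bound over $i = 1, \dots, k$ gives $\para_0 \in \paraspace{0} = \bigcap_{i=1}^k \paraspace{i}$ with probability at least $1 - \delta$.

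For the deterministic step, I would argue on the event $\{ \para_0 \in \paraspace{0} \}$. Fix any $\para \in \paraspace{0}$. Then both $\para$ and $\para_0$ lie within $\tilde{\epsilon}$ of $\hat{D}^{(i)}$ in the $\Mnorm{\cdot}{2}$ norm, so the triangle inequality yields $\Mnorm{(\para - \para_0) \begin{bmatrix} I_p \\ L_i \end{bmatrix}}{2} \le 2\tilde{\epsilon}$ for each $i$. Concatenating these inequalities and invoking the definition \eqref{algo1eq1} of $M$ gives $\Mnorm{(\para - \para_0) M}{2} \le 2k\tilde{\epsilon}$. The radius $\tilde{\epsilon}$ in \eqref{algo1eq2} is calibrated exactly so that $\Mnorm{(\para - \para_0) M}{2} \ge (2k\tilde{\epsilon}/\epsilon_0)\,\Mnorm{\para - \para_0}{2}$, whence $\Mnorm{\para - \para_0}{2} \le \epsilon_0$, the hypothesis of Lemma \ref{SNeighborhood}.

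I expect the deterministic step to be the main obstacle. The algorithm never observes $\para_0$ directly; each feedback reveals only the projected matrix $D^{(i)} = \para_0 \begin{bmatrix} I_p \\ L_i \end{bmatrix}$, which constrains $\para_0$ along merely $p$ of its $q = p+r$ column directions. Controlling the full deviation $\Mnorm{\para - \para_0}{2}$ therefore hinges on the stacked matrix $M$ having full row rank $q$, which is precisely why $k = 1 + \lceil r/p \rceil$ feedbacks are drawn so that $kp \ge q$; the infimum in \eqref{algo1eq2} encodes the worst-case conditioning of reconstructing $\para_0$ from these projections, and is strictly positive exactly when $\rank{M} = q$, an event of probability one for independent Gaussian feedbacks. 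Converting the per-feedback radius $\tilde{\epsilon}$ into the joint radius $\epsilon_0$ around $\para_0$ through this conditioning constant is the technical heart of the argument.
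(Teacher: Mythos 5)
Your proposal is correct and follows essentially the same route as the paper: the probabilistic step (Lemmas \ref{regularityproof} and \ref{unitrootexclusion} validating Theorem \ref{consistency}, then a union bound giving $\para_0 \in \paraspace{0}$ with probability $1-\delta$) and the deterministic step (triangle inequality, concatenation into $M$, and the calibration of $\tilde{\epsilon}$ in \eqref{algo1eq2} yielding $\Mnorm{\para-\para_0}{2} \le \epsilon_0$, then Lemma \ref{SNeighborhood}) are exactly the argument the paper gives in the paragraphs preceding the theorem. Your closing remark on why $kp \ge q$ and the full row rank of $M$ are needed correctly identifies the role of \eqref{algo1eq2}, matching the paper's observation that $\rank{M}=q$ almost surely.
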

In other words, the probability of failing to stabilize the system decays exponentially when the time of interaction with the system grows (see \eqref{generalcasecondition}). Obviously, the normal distribution $\mathcal{N}\left(0,I_r\right)$ used in Algorithm \ref{stabilizationalgo} is not unique, and can be substituted by any general full rank distribution over $\R^r$.
 
\section{Conclusion} \label{future}

We studied an adaptive stabilization scheme for linear dynamical systems, focusing on finite time analysis. Tailoring a novel procedure based on random linear feedbacks, we established non-asymptotic results under mild assumptions, namely those of
system stabilizability and a fairly general noise process that encompasses heavy-tailed distributions.

There are a number of interesting extensions of the current work. First, finite time analysis of stabilization given {\em noisy observations} of the state vector is an interesting topic for future investigation. Second, studying the stabilization problem in a high-dimensional setting (assuming sparsity or some other low dimensional structure) is also an interesting subject to be addressed in the future.


\appendices
\section{\edit{Proof of Proposition \ref{stabilizable}}}
\begin{proof}
	For convenience, let $K_0=\Kmatrix{\para_0},$ and $L_0=\Lmatrix{\para_0}$. First, assume $\left[A_0,B_0\right]$ is stabilizable, $L$ is a stabilizer, $D=A_0+B_0L$, and $\eigmax{D} < 1$. For arbitrary fixed PSD matrix $P_0$, define $P_t\left(P_0\right), t=1, \cdots, T$ recursively,
	\begin{eqnarray*}
	&& P_t\left(P_0\right) = Q+A_0'P_{t-1}\left(P_0\right)A_0\\ 
	&-& A_0'P_{t-1}\left(P_0\right)B_0 \left(B'_0 P_{t-1}\left(P_0\right) B_0+R\right)^{-1} B'_0P_{t-1}\left(P_0\right)A_0. \label{Pdefine}
	\end{eqnarray*}	
	Letting $\instantcost{t}$ be as defined in \eqref{systemeq2}, the optimal control policy for minimizing the finite horizon cumulative cost $\cost_T = \sum\limits_{t=0}^{T-1} \E{c_t} + \E{x(T)'P_0x(T)}$, 
	is $u(t)=L_t x(t), t=0,\cdots,T-1$, \cite{bertsekas1995dynamic}, where
	\begin{equation} \label{optimalfeedback}
	L_t=- \left(B_0' P_{T-t-1}\left(P_0\right) B_0+R\right)^{-1}B_0'P_{T-t-1}\left(P_0\right)A_0 .
	\end{equation}
	Moreover, this optimal policy yields the optimal cost
	\begin{eqnarray}
	\min \cost_T= x(0)'P_T\left(P_0\right) x(0) + \sum\limits_{t=0}^{T-1} \tr {CP_t\left(P_0\right)}. \label{mincost1}
	\end{eqnarray}
	On the other hand, applying the control policy $u(t)=Lx(t), 0 \leq t \leq T-1$, we have 
	\begin{eqnarray*}
	\E{x(T)'P_0x(T)|x(T-1)} &=& x(T-1)'D'P_0 Dx(T-1) \\
	&+& \tr {C P_0}, \\
	\E {c_{t+1}|x(t)} &=& x(t)'D'\left(Q+L'RL\right)Dx(t) \\
	&+& \tr {C \left(Q+L'RL\right)},
	\end{eqnarray*}
	for $t=0, \cdots, T-2$. Hence, the finite horizon cost becomes
	\begin{eqnarray}
	\cost_T= x(0)'\tilde{P}_T\left(P_0\right) x(0) + \sum\limits_{t=0}^{T-1} \tr {C\tilde{P}_t\left(P_0\right)}, \label{mincost2}
	\end{eqnarray}
	where $\tilde{P}_t\left(P_0\right), t=1, \cdots, T$ are defined recursively as
	\begin{eqnarray}
	\tilde{P}_0\left(P_0\right) &=& P_0, \\
	\tilde{P}_t\left(P_0\right) &=& Q+L'RL+ D'\tilde{P}_{t-1}\left(P_0\right)D. 
	\end{eqnarray}
	Since $ \eigmax{D} < 1$, $\lim\limits_{T \to \infty} \tilde{P}_T\left(P_0\right) = P_\infty$ for a PSD matrix $P_\infty$. Letting $C \to 0$, by \eqref{mincost1}, \eqref{mincost2} we have $$x(0)'P_T\left(P_0\right)x(0) \leq x(0)'\tilde{P}_T\left(P_0\right)x(0),$$ 
	i.e. $x(0)'P_T\left(P_0\right)x(0), T=1,2,\cdots$ is bounded. If $P_0=0$, this sequence is nondecreasing, because minimizing both sides of $$\sum\limits_{t=0}^{T-1} c_t  \leq \sum\limits_{t=0}^{T} c_t$$
	subject to $$x(t+1)=A_0x(t)+B_0u(t),$$ 
	we get
	\begin{equation*}
	x(0)'P_T\left(0\right) x(0) \leq x(0)'P_{T+1}\left(0\right) x(0).
	\end{equation*} 
	Therefore, the nondecreasing bounded sequence $\left\{x(0)'P_T\left(0\right)x(0)\right\}_{T=1}^\infty$ converges. Since $x(0)$ is arbitrary, $\left\{P_T(0)\right\}_{T=1}^\infty$ itself converges: $$\lim\limits_{T \to \infty} P_T(0)= P_\infty(0).$$ 
	According to the recursive definition of $P_t(0)$ in \eqref{Pdefine}, $P_\infty(0)$ is a solution of \eqref{ricatti2}. This shows the existence of a solution, while uniqueness will be established later. 
	
	Next, since $\lim\limits_{T \to \infty} P_T(0)= P_\infty(0)$, \eqref{mincost1} implies $\lim\limits_{t \to \infty} \tr{CP_t(0)}=\tr{CP_\infty(0)}$. So, the Cesaro mean also converges to this limit, i.e. $$\optcost{\para_0}=\tr{CP_\infty(0)}.$$ 
	Optimality of the linear feedback $u(t)=L_0x(t)$, is then established through \eqref{optimalfeedback}. Now, we are ready to show that $L_0$ is a stabilizer. Letting $$D=A_0+B_0L_0, C \to 0, K_0 = P_\infty(0),$$ 
	we show that for arbitrary $x(0)$, $x(t)=D^tx(0)$ vanishes as $t$ grows. First, note that by \eqref{ricatti2}, \eqref{ricatti1},
	\begin{eqnarray*}
		\left(B_0'K_0B_0+R\right) L_0 &=& -B_0'K_0A , \\
		L_0' \left(B_0'K_0B_0+R\right) L_0 &=& A_0' K_0B_0 \left(B_0'K_0B_0+R\right)^{-1} B_0'K_0A_0 .
	\end{eqnarray*}
	Therefore, we obtain
	\begin{eqnarray*}
		&& Q+L_0'RL_0+D'K_0D \\
		&=& Q+ A_0'K_0A_0+L_0'\left(B_0'K_0A_0+R \right)L_0 \\
		&+& A_0'K_0B_0L_0+ L_0'B_0'K_0A_0 \\
		&=& Q+ A_0'K_0A_0-A_0' K_0B_0 \left(B_0'K_0B_0+R\right)^{-1} B_0'K_0A_0 \\
		&+& \left[L_0'\left(B_0'K_0B_0+R\right)+A_0'K_0B_0\right]L_0 \\
		&+& L_0'\left[\left(B_0'K_0B_0+R\right) L_0+B_0'K_0A_0\right] 
		= K_0, 
	\end{eqnarray*}
	that is, 
	\begin{equation} \label{ricatti3}
	K_0-D'K_0 D = Q+L_0'RL_0.
	\end{equation}
	So, 
	\begin{equation} \label{stable1}
	x(t+1)'K_0 x(t+1)-x(t)'K_0 x(t) = -x(t)'\left(Q+L_0'RL_0\right) x(t).
	\end{equation}
	Adding up both sides of \eqref{stable1}, because $K_0$ is PSD we get
	\begin{eqnarray} 
	-x(0)'K_0 x(0) &\leq& x(t+1)'K_0 x(t+1)-x(0)'K_0 x(0) \notag \\
	&=& -\sum\limits_{i=0}^{t} x(i)'\left(Q+L_0'RL_0\right) x(i). \label{stable2}
	\end{eqnarray}
	In other words, $$\lim\limits_{t \to \infty} x(t)'\left(Q+L_0'RL_0\right) x(t) = 0.$$
	Thus, since $Q$ is positive definite, $\lim\limits_{t \to \infty} x(t) = 0$, i.e. $L_0$ is a stabilizer. Back to the proof of the existence of a solution $K_0$, we show that for arbitrary PSD $P_0$, it holds that $\lim\limits_{T \to \infty} P_T(P_0)= P_\infty(0)$. To do so, minimizing both sides of $$\sum\limits_{t=0}^{T-1} c_t  \leq \sum\limits_{t=0}^{T-1} c_t + x(T)'P_0x(T),$$
	subject to $$x(t+1)=A_0x(t)+B_0u(t),$$ 
	we get
	\begin{equation} \label{arbitrary1}
	x(0)'P_T\left(0\right) x(0) \leq x(0)'P_T\left(P_0\right) x(0).
	\end{equation} 
	On the other hand, applying controller $u(t)=L_0x(t)$, the cost $\sum\limits_{t=0}^{T-1} c_t + x(T)'P_0x(T)$ becomes
	\begin{equation} \label{arbitrary2}
	\sum\limits_{t=0}^{T-1} x(0)'{D'}^t\left(Q+L_0'RL_0\right)D^tx(0) + x(0)'{D'}^TP_0D^Tx(0).
	\end{equation}
	Note that because of stability $\eigmax{D}<1$, we have $$\lim\limits_{T \to \infty}x(0)'{D'}^TP_0D^Tx(0) =0.$$
	Therefore, by combining \eqref{arbitrary1}, \eqref{arbitrary2}, and \eqref{ricatti3},
	\begin{eqnarray*}
		&& x(0)'P_\infty \left(0\right) x(0) \\
		&=& \lim\limits_{T \to \infty}x(0)'P_T\left(0\right) x(0) \leq \lim\limits_{T \to \infty} x(0)'P_T\left(P_0\right) x(0) \\
		&\leq& \lim\limits_{T \to \infty} \sum\limits_{t=0}^{T-1} x(0)'{D'}^t\left(Q+L_0'RL_0\right)D^tx(0) \\
		&+& x(0)'{D'}^TP_0D^Tx(T) \\
		&=& \lim\limits_{T \to \infty} \sum\limits_{t=0}^{T-1} x(0)'{D'}^t\left(K_0-D'K_0D\right)D^tx(0) \\
		&=& x(0)'K_0x(0),
	\end{eqnarray*}
	i.e. for an arbitrary $P_0$, $$\lim\limits_{T \to \infty} P_T\left(P_0\right)=P_\infty(0).$$
	Using this, we show that $K_0$ is the unique solution of \eqref{ricatti2}. If $P_*$ is another solution, let $P_0=P_*$, which plugging in \eqref{Pdefine} implies that $P_t\left(P_*\right)=P_*$, for all $t=1,2, \cdots$. Hence $P_* = \lim\limits_{T \to \infty} P_T\left(P_*\right)=P_\infty(0)$, i.e. the solution $K_0$ of \eqref{ricatti2} exists, and is unique. 
	
	Conversely, if $K_0$ is a solution of \eqref{ricatti2}, define $L_0$ as \eqref{ricatti1} and $D=A_0+B_0L_0$. Note that $K_0$ is positive semidefinite, and let $P_0=K_0$. Defining $P_t$ by \eqref{Pdefine}, we obtain $P_t=K_0$, for all $t=0,1,\cdots$. As before, \eqref{ricatti2}, \eqref{ricatti1} imply \eqref{ricatti3}. Similarly, \eqref{stable1}, \eqref{stable2} hold, i.e. $\lim\limits_{t \to \infty} D^tx(0)=0$ for arbitrary $x(0)$, which implies that $L_0$ defined in \eqref{ricatti1} is a stabilizer.
\end{proof}
\section{\edit{Proof of Lemma \ref{SNeighborhood}}}
\begin{proof}
	Since $\para$ is stabilizable, according to Proposition \ref{stabilizable}, $\para \extendedLmatrix{\para}$ is stable; \begin{equation*}
	\eigmax{\para \extendedLmatrix{\para}} \leq 1-2\rho,
	\end{equation*}
	for some $\rho>0$. For arbitrary fixed $1 \leq i \leq p, 1 \leq j \leq q$, let all entries of the matrix $X_{ij} \in \R^{p \times q}$ be zero, except the $ij$-th entry, which is one. Then, for $\varphi \in \R$, consider the polynomial $$f_\varphi \left(\lambda\right)=\det \left( \left(\para+ \varphi X_{ij}\right)\extendedLmatrix{\para}-\lambda I_p \right).$$ 
	All coefficients of $f_\varphi \left(\lambda\right)$ are linear functions of $\varphi$. Further, the magnitudes of the roots of $f_\varphi \left(\lambda\right)$ are continuous with respect to the coefficients, and so, are also continuous with respect to $\varphi$. Since all roots of $f_0 \left(\lambda\right)$ are in magnitude at most $1-2 \rho$, there exists $\epsilon_{ij}>0$, such that $\left|\varphi\right|< \epsilon_{ij}$ implies that all roots of $f_\varphi \left(\lambda\right)$ are in magnitude at most $1-\left(2-1/(pq)\right) \rho$. Taking $\epsilon_0 = \min \limits_{i,j} \epsilon_{ij}$, by $\Mnorm{\para-\para_0}{2}<\epsilon_0$, $\para_0$ can be written in the form of $\para_0=\para+ \sum\limits_{i=1}^{p} \sum\limits_{j=1}^{q} \varphi_{ij}X_{ij}$, where $\left|\varphi_{ij}\right|<\epsilon_{ij}$, for all $i,j$. Therefore, all roots of $$f \left(\lambda\right)=\det \left( \para_0 \extendedLmatrix{\para}-\lambda I_p \right)$$ are in magnitude at most $1-\rho$, which is the desired result.
\end{proof}


\begin{thebibliography}{10}
	\providecommand{\url}[1]{#1}
	\csname url@samestyle\endcsname
	\providecommand{\newblock}{\relax}
	\providecommand{\bibinfo}[2]{#2}
	\providecommand{\BIBentrySTDinterwordspacing}{\spaceskip=0pt\relax}
	\providecommand{\BIBentryALTinterwordstretchfactor}{4}
	\providecommand{\BIBentryALTinterwordspacing}{\spaceskip=\fontdimen2\font plus
		\BIBentryALTinterwordstretchfactor\fontdimen3\font minus
		\fontdimen4\font\relax}
	\providecommand{\BIBforeignlanguage}[2]{{%
			\expandafter\ifx\csname l@#1\endcsname\relax
			\typeout{** WARNING: IEEEtran.bst: No hyphenation pattern has been}%
			\typeout{** loaded for the language `#1'. Using the pattern for}%
			\typeout{** the default language instead.}%
			\else
			\language=\csname l@#1\endcsname
			\fi
			#2}}
	\providecommand{\BIBdecl}{\relax}
	\BIBdecl
	
	\bibitem{lai1986asymptotically}
	T.~Lai, ``Asymptotically efficient adaptive control in stochastic regression
	models,'' \emph{Advances in Applied Mathematics}, vol.~7, no.~1, pp. 23--45,
	1986.
	
	\bibitem{guo1988convergence}
	L.~Guo and H.~Chen, ``Convergence rate of els based adaptive tracker,''
	\emph{Syst. Sci \& Math. Sci}, vol.~1, pp. 131--138, 1988.
	
	\bibitem{chen1989convergence}
	H.-F. Chen and J.-F. Zhang, ``Convergence rates in stochastic adaptive
	tracking,'' \emph{International Journal of Control}, vol.~49, no.~6, pp.
	1915--1935, 1989.
	
	\bibitem{lai1991parallel}
	T.~L. Lai and Z.~Ying, ``Parallel recursive algorithms in asymptotically
	efficient adaptive control of linear stochastic systems,'' \emph{SIAM journal
		on control and optimization}, vol.~29, no.~5, pp. 1091--1127, 1991.
	
	\bibitem{guo1991aastrom}
	L.~Guo and H.-F. Chen, ``The {\aa}strom-wittenmark self-tuning regulator
	revisited and els-based adaptive trackers,'' \emph{IEEE Transactions on
		Automatic Control}, vol.~36, no.~7, pp. 802--812, 1991.
	
	\bibitem{bar1974dual}
	Y.~Bar-Shalom and E.~Tse, ``Dual effect, certainty equivalence, and separation
	in stochastic control,'' \emph{IEEE Transactions on Automatic Control},
	vol.~19, no.~5, pp. 494--500, 1974.
	
	\bibitem{lai1982least}
	T.~L. Lai and C.~Z. Wei, ``Least squares estimates in stochastic regression
	models with applications to identification and control of dynamic systems,''
	\emph{The Annals of Statistics}, pp. 154--166, 1982.
	
	\bibitem{becker1985adaptive}
	A.~Becker, P.~Kumar, and C.-Z. Wei, ``Adaptive control with the stochastic
	approximation algorithm: Geometry and convergence,'' \emph{IEEE Transactions
		on Automatic Control}, vol.~30, no.~4, pp. 330--338, 1985.
	
	\bibitem{kumar1990convergence}
	P.~Kumar, ``Convergence of adaptive control schemes using least-squares
	parameter estimates,'' \emph{IEEE Transactions on Automatic Control},
	vol.~35, no.~4, pp. 416--424, 1990.
	
	\bibitem{lai1985asymptotically}
	T.~L. Lai and H.~Robbins, ``Asymptotically efficient adaptive allocation
	rules,'' \emph{Advances in applied mathematics}, vol.~6, no.~1, pp. 4--22,
	1985.
	
	\bibitem{campi1997achieving}
	M.~C. Campi, ``Achieving optimality in adaptive control: the" bet on the best"
	approach,'' in \emph{Decision and Control, 1997., Proceedings of the 36th
		IEEE Conference on}, vol.~5.\hskip 1em plus 0.5em minus 0.4em\relax IEEE,
	1997, pp. 4671--4676.
	
	\bibitem{campi1998adaptive}
	M.~C. Campi and P.~Kumar, ``Adaptive linear quadratic gaussian control: the
	cost-biased approach revisited,'' \emph{SIAM Journal on Control and
		Optimization}, vol.~36, no.~6, pp. 1890--1907, 1998.
	
	\bibitem{bittanti2006adaptive}
	S.~Bittanti and M.~C. Campi, ``Adaptive control of linear time invariant
	systems: the “bet on the best” principle,'' \emph{Communications in
		Information \& Systems}, vol.~6, no.~4, pp. 299--320, 2006.
	
	\bibitem{abbasi2011regret}
	Y.~Abbasi-Yadkori and C.~Szepesv{\'a}ri, ``Regret bounds for the adaptive
	control of linear quadratic systems.'' in \emph{COLT}, 2011, pp. 1--26.
	
	\bibitem{ibrahimi2012efficient}
	M.~Ibrahimi, A.~Javanmard, and B.~V. Roy, ``Efficient reinforcement learning
	for high dimensional linear quadratic systems,'' in \emph{Advances in Neural
		Information Processing Systems}, 2012, pp. 2636--2644.
	
	\bibitem{bertsekas1995dynamic}
	D.~P. Bertsekas, \emph{Dynamic programming and optimal control}.\hskip 1em plus
	0.5em minus 0.4em\relax Athena Scientific Belmont, MA, 1995, vol.~1, no.~2.
	
	\bibitem{faradonbeh2016optimality}
	M.~K.~S. Faradonbeh, A.~Tewari, and G.~Michailidis, ``Optimality of fast
	matching algorithms for random networks with applications to structural
	controllability,'' \emph{IEEE Transactions on Control of Network Systems},
	vol.~4, no.~4, pp. 770--780, 2017.
	
	\bibitem{lai1985asymptotic}
	T.~Lai and C.~Wei, ``Asymptotic properties of multivariate weighted sums with
	applications to stochastic regression in linear dynamic systems,''
	\emph{Multivariate Analysis VI}, pp. 375--393, 1985.
	
	\bibitem{green1985persistence}
	M.~Green and J.~B. Moore, ``Persistence of excitation in linear systems,'' in
	\emph{American Control Conference, 1985}.\hskip 1em plus 0.5em minus
	0.4em\relax IEEE, 1985, pp. 412--417.
	
	\bibitem{lai1983asymptotic}
	T.~Lai and C.~Wei, ``Asymptotic properties of general autoregressive models and
	strong consistency of least-squares estimates of their parameters,''
	\emph{Journal of Multivariate Analysis}, vol.~13, no.~1, pp. 1--23, 1983.
	
	\bibitem{ourpaper}
	M.~K.~S. Faradonbeh, A.~Tewari, and G.~Michailidis, ``Finite time
	identification in unstable linear systems,'' \emph{Automatica}, 2018,
	accepted.
	
	\bibitem{lai1986extended}
	T.~Lai and C.-Z. Wei, ``Extended least squares and their applications to
	adaptive control and prediction in linear systems,'' \emph{IEEE Transactions
		on Automatic Control}, vol.~31, no.~10, pp. 898--906, 1986.
	
	\bibitem{nielsen2005strong}
	B.~Nielsen, ``Strong consistency results for least squares estimators in
	general vector autoregressions with deterministic terms,'' \emph{Econometric
		Theory}, pp. 534--561, 2005.
	
	\bibitem{nielsen2006order}
	------, ``Order determination in general vector autoregressions,'' in
	\emph{Time Series and Related Topics}.\hskip 1em plus 0.5em minus 0.4em\relax
	Institute of Mathematical Statistics, 2006, pp. 93--112.
	
	\bibitem{lai1983note}
	T.~Lai and C.~Wei, ``A note on martingale difference sequences satisfying the
	local marcinkiewicz-zygmund condition,'' \emph{Bulletin of the Institute of
		Mathematics, Academia Sinica}, vol.~11, no.~1, p.~13, 1983.
	
	\bibitem{nielsen2009singular}
	B.~Nielsen, ``Singular vector autoregressions with deterministic terms: Strong
	consistency and lag order determination,'' \emph{Discussion paper, Nuffield
		College}, 2009.
	
\end{thebibliography}

\end{document}